\begin{document}
	
\title{HUSP-SP: Faster Utility Mining on Sequence Data}

\author{Chunkai Zhang}
\affiliation{
	\institution{Harbin Institute of Technology (Shenzhen)}
	\city{Shenzhen}
	\country{China}
}

\email{ckzhang@hit.edu.cn}

\author{Yuting Yang}
\affiliation{
	\institution{Harbin Institute of Technology (Shenzhen)}
	\city{Shenzhen}
 	\country{China}
}
\email{20S151137@stu.hit.edu.cn}

\author{Zilin Du}
 \affiliation{
 	\institution{Harbin Institute of Technology (Shenzhen)}
 	\city{Shenzhen}
 	\country{China}
 }
\email{yorickzilindu@gmail.com}

\author{Wensheng Gan}
\authornote{This is the corresponding author}
\affiliation{
	\institution{Jinan University}
	\city{Guangzhou}
	\country{China}
}
\email{wsgan001@gmail.com}

\author{Philip S. Yu}
\affiliation{
	\institution{University of Illinois at Chicago}
	\city{Chicago}
	\country{USA}
}
\email{psyu@uic.edu}

\begin{abstract}
	
High-utility sequential pattern mining (HUSPM) has emerged as an important topic due to its wide application and considerable popularity. However, due to the combinatorial explosion of the search space when the HUSPM problem encounters a low utility threshold or large-scale data, it may be time-consuming and memory-costly to address the HUSPM problem. Several algorithms have been proposed for addressing this problem, but they still cost a lot in terms of running time and memory usage. In this paper, to further solve this problem efficiently, we design a compact structure called sequence projection (seqPro) and propose an efficient algorithm, namely discovering high-utility sequential patterns with the seqPro structure (HUSP-SP). HUSP-SP utilizes the compact seq-array to store the necessary information in a sequence database. The seqPro structure is designed to efficiently calculate candidate patterns' utilities and upper bound values. Furthermore, a new upper bound on utility, namely tighter reduced sequence utility (TRSU) and two pruning strategies in search space, are utilized to improve the mining performance of HUSP-SP. Experimental results on both synthetic and real-life datasets show that HUSP-SP can significantly outperform the state-of-the-art algorithms in terms of running time, memory usage, search space pruning efficiency, and scalability.
	
\end{abstract}

%
%
\begin{CCSXML}
<ccs2012>
 <concept>
  <concept_id>10010520.10010553.10010562</concept_id>
  <concept_desc>Computer systems organization~Embedded systems</concept_desc>
  <concept_significance>500</concept_significance>
 </concept>
 <concept>
  <concept_id>10010520.10010575.10010755</concept_id>
  <concept_desc>Computer systems organization~Redundancy</concept_desc>
  <concept_significance>300</concept_significance>
 </concept>
 <concept>
  <concept_id>10010520.10010553.10010554</concept_id>
  <concept_desc>Computer systems organization~Robotics</concept_desc>
  <concept_significance>100</concept_significance>
 </concept>
 <concept>
  <concept_id>10003033.10003083.10003095</concept_id>
  <concept_desc>Networks~Network reliability</concept_desc>
  <concept_significance>100</concept_significance>
 </concept>
</ccs2012>
\end{CCSXML}

\ccsdesc[500]{Information Systems~Data mining}



\keywords{utility mining, sequence data, seq-array, upper bound.}

\maketitle

\renewcommand{\shortauthors}{C. Zhang et al.}

\section{Introduction}

In the era of big data, sequence data is commonly seen in many domains. Sequential pattern mining (SPM) \cite{agrawal1995mining,han2001prefixspan, fournier2017surveys,van2018mining}, which extracts frequent subsequences from sequence database, is an interesting and important data mining topic in knowledge discovery in databases (KDD) \cite{chen1996data}. In the past two decades, SPM and the earlier developed frequent pattern mining (FPM) \cite{agrawal1994fast, han2004mining} have been applied in various domains, such as market basket analysis \cite{brin1997dynamic}, biology \cite{bejerano1999modeling}, weblog mining \cite{ahmed2010mining}, and natural language processing \cite{jindal2006identifying}. The main selection criteria for sequential patterns in SPM is co-occurrence (or the support measure): sequences with a high frequency are considered to be more interesting \cite{agrawal1995mining}. This characteristic contributes to the downward closure property (also known as the Apriori property \cite{agrawal1994fast}), which plays a fundamental role in the search space pruning of SPM algorithms and FPM algorithms, respectively. Note that SPM generalizes FPM by considering the sequential ordering of sequences. The frequency-based models (SPM and FPM) fit some fields, and field problems can generally be handled efficiently due to the inherent Apriori property. However, the effectiveness of data mining algorithms always takes precedence over efficiency. As the frequency measure only considers the objectivity of the data, the frequency-based models are no longer applicable when the subjective variables are also regarded as important factors. For instance, the patterns involving clingfilm are more likely to be obtained than those involving the refrigerator when the frequency-based model is applied to department store sales data. The managers may be more interested in the patterns containing refrigerators than clingfilm patterns, since they are more likely to think that refrigerators are much more profitable than clingfilm. To address this issue, another more comprehensive criteria, called utility, is introduced to utility-oriented pattern mining (or utility mining for short) \cite{gan2021survey}. For example, there have been many studies related to high-utility itemset mining (HUIM, from the perspective of transaction data) \cite{chan2003mining, lin2016efficient}, high-utility sequential pattern mining (HUSPM, from the perspective of sequence data) \cite{yin2012uspan, yin2013efficiently, wang2016efficiently,gan2018privacy}, and utility-oriented episode mining (HUEM, from the perspective of sequence events) \cite{gan2019utility}. In addition, of course, there are many case studies for incorporating utility mining into real-life situations.

In summary, HUSPM considers quantity, sequential order, and utility, which can usually provide the user with more informative and comprehensive patterns. However, there are more challenges in solving the HUSPM as described below. (1) The combinatorial explosion of sequences and utility computation in HUSPM. (2) HUSPM cannot utilize the downward closure property of Apriori \cite{agrawal1994fast} to efficiently prune the search space like the frequency-based algorithms. (3) The utility calculation task is far more difficult than frequency computation. More details about addressing the HUSPM problem have been summarized in \cite{zhang2021tkus,zhang2021shelf,gan2021survey}. Until now, there are several works have focused on this issue and developed efficient algorithms to extract the complete set of high-utility sequential patterns (HUSPs) from sequence data, such as USpan \cite{yin2012uspan}, HuspExt \cite{alkan2015crom}, HUS-Span \cite{wang2016efficiently}, ProUM \cite{gan2020proum}, and HUSP-ULL \cite{gan2020fast}. In addition, many interesting issues of effectiveness in utility-oriented SPM have been extensively studied, including the top-$k$ model \cite{wang2016efficiently,zhang2021tkus}, on-shelf availability \cite{zhang2021shelf}, explainable HUSPM \cite{gan2021explainable}, etc. HUSPM's existing work has developed a variety of data structures and pruning methods to improve mining efficiency. However, they are still expensive in terms of execution time and memory usage. In particular, in the era of big data, the characteristics of big data usually require data mining algorithms or models to have high efficiency and suitable effectiveness.

In order to improve the efficiency of utility-oriented sequence mining, we propose a faster algorithm, abbreviated as HUSP-SP, for discovering high-utility sequential patterns with a sequence projection (seqPro) structure and a new upper bound. The main contributions of this study can be summarized as follows:

\begin{itemize}
	\item 	\textbf{A compact data structure}.  The designed seqPro consists of a sequence-array (seq-array) and an extension-list, and it stores the corresponding projected database and some auxiliary information for each candidate pattern. Based on the seqPro structure, HUSP-SP can efficiently generate the candidate patterns and then calculate the candidate patterns' real utility and upper bound values.
	
	\item \textbf{A new utility upper bound}. We propose a tighter reduced sequence utility (TRSU) and provide detailed proof that TRSU is much tighter than all the existing upper bounds for HUSPM, such as reduced sequence utility (RSU) \cite{wang2016efficiently}.
	
	\item \textbf{Powerful pruning strategy}. A search space pruning strategy is designed, namely the early pruning (EP) strategy based on TRSU. Generally, the combination of the irrelevant items pruning (IIP) strategy \cite{gan2020fast} and the EP strategy can greatly reduce the search space.
	
	\item \textbf{High efficiency}.	Experimental results show that HUSP-SP can efficiently discover the complete set of HUSPs, and is superior to existing advanced algorithms in execution time, memory usage, search space pruning efficiency, and scalability.
\end{itemize}

The rest of this article is organized as follows. Section \ref{sec:relatedwork} briefly reviews the work of HUIM and HUSPM. Section \ref{sec:preliminaries} gives the basic definitions and formulates the HUSPM problem. In Section \ref{sec:method}, we propose a new data structure, upper bounds on sequence utility, and pruning strategies, and finally introduce all the details of the HUSP-SP algorithm. The experimental evaluation of the HUSP-SP method is given in Section \ref{sec:experiments}. Finally, the conclusion and future work are discussed in Section \ref{sec:conclusion}.
\section{Related Work}  \label{sec:relatedwork}

In this section, we separately review the literature on high-utility itemset mining and high-utility sequential pattern mining.

\subsection{High-Utility Itemset Mining}

In utility mining, each item/object in the database is associated with a utility to represent its importance. The task of HUIM \cite{chan2003mining} is to discover the complete set of high-utility itemsets (HUIs) consisting of items whose sum of utility values is higher than the predefined minimum utility threshold. Because HUIM no longer has the anti-monotone property of frequent pattern mining or association rule mining (ARM) \cite{agrawal1994fast}, mining HUIs has become extremely difficult as a result of the loss of a powerful search space pruning strategy. Liu \textit{et al.} \cite{liu2005two} handled this problem well with the transaction-weighted downward closure (TWDC) property. The Two-Phase algorithm \cite{liu2005two} efficiently extracts the candidate set of HUIs, called high transaction-weighted utilization itemsets (HTWUIs), based on the TWDC property in phase I. And then, for mining the real HUIs, only one extra database scan is performed to filter the overestimated itemsets. Inspired by the Two-Phase algorithm, some tree-based algorithms (e.g., IHUP \cite{ahmed2009efficient}, UP-Growth \cite{tseng2010up}, and UP-Growth$^+$ \cite{tseng2013efficient}) were developed to achieve better performance by reducing the number of HTWUIs that are generated in phase I. The main drawback of these algorithms is that they need to generate a large number of candidates in phase I, which results in poor runtime and memory performance. After that, several algorithms without two-phase operations, such as HUI-Miner \cite{liu2012mining}, FHM \cite{fournier2014fhm}, and EFIM \cite{zida2015efim}, were proposed. They can quickly discover HUIs without the candidate generation phase, which reduces the costly generation and utility computation of a large number of candidates. Besides the concentration on mining efficiency, there are also many works that concentrate on addressing the effectiveness issue of HUIM, such as mining the concise representations of HUIs to address the obscure mining result problem (large number while many of them are redundant) when the minimum utility threshold is set too low \cite{dam2019cls,shie2012efficient}; mining the top-$k$ HUIs without setting the minimum utility threshold \cite{tseng2015efficient}; exploiting the correlated HUIs that are not redundant \cite{gan2019correlated}, and so on. Several comprehensive surveys \cite{fournier2022pattern,gan2018surveyii,gan2021survey} provide more details on utility mining advancements.

\subsection{High-Utility Sequential Pattern Mining}

The utility-driven HUSPM \cite{gan2021survey,gan2018privacy} is an emerging problem that incorporates the utility concept into sequential pattern mining (SPM) \cite{van2018mining,fournier2017surveys,gan2019survey} to extract informative patterns by considering rich information, including sequential order and utility. In the past, HUSPM has been used for extracting web page traversal path patterns, web access sequences \cite{ahmed2010mining}, and high utility mobile sequences \cite{tseng2013efficient}. Similar to HUIM, the widely used downward closure property (also called the Apriori property \cite{agrawal1995mining}) in FPM and SPM is disabled by the introduction of utility, which makes the traditional pruning methods no longer applicable to HUSPM. Ahmed \textit{et al.} \cite{ahmed2010novel} first proposed two algorithms, UtilityLevel and UtilitySpan. As two-phase algorithms, both of them generate candidates by using a utility upper bound called sequence-weighted utilization (SWU) to prune the search space in phase I and then compute the exact utilities of each candidate. However, there was still no unified definition of HUSPM until Yin \textit{et al.} \cite{yin2012uspan} proposed a generic HUSPM framework and the USpan algorithm. In USpan, a utility-matrix structure was designed to compact the sequences, along with the sequence order and utility information of the processed database. The lexicographic quantitative sequence tree (LQS-tree) was introduced to represent the search space of HUSPM. Each node in the LQS-tree was denoted as a sequential pattern, and USpan performed the mining process to traverse the LQS-tree and extract the HUSPs. Additionally, two upper bounds, the SWU and sequence-projected utilization (SPU), were utilized for pruning the search space. However, the newly designed SPU-based pruning strategy has been proven to miss the real HUSPs in some conditions \cite{truong2019survey,gan2020proum} which means USpan and other SPU-based algorithms can not return complete HUSPs. Some works with tighter upper bounds and more compact data structures were proposed to improve the mining efficiency. The HUS-Span algorithm \cite{wang2016efficiently} adopts two tighter upper bounds: the prefix extension utility (PEU) for depth pruning and the reduced sequence utility (RSU) for breadth pruning. The ProUM algorithm \cite{gan2020proum} comes up with a new upper bound, the sequence extension utility (SEU), and a compact utility-array structure.

Besides, to simplify the parameter settings, e.g., the minimum utility threshold, some studies focus on mining top-$k$ HUSPs \cite{yin2013efficiently,wang2016efficiently,zhang2021tkus}. Other interesting topics for utility mining have been extensively studied, such as incremental HUSPM \cite{wang2018incremental}, HUSPM over data streams \cite{zihayat2017memory}, HUSPM with individualized thresholds \cite{gan2020utility}, HUSPM with negative item values \cite{xu2017mining}, FMaxCloHUSM \cite{truong2019fmaxclohusm} for mining frequent closed and maximal high utility sequences, HUSPM on the Internet of Things \cite{srivastava2020large}, targeted high-utility sequence querying \cite{zhang2021tusq}, and self-adaptive high-average utility one-off SPM \cite{wu2021haop}. Gan \textit{et al.} \cite{gan2021utility} proposed two algorithms, MDUS$_{EM}$ and MDUS$_{SD}$, to discover multidimensional high-utility sequential patterns. On-shelf utility mining based on sequence data was also studied to improve on-shelf availability \cite{zhang2021shelf}. However, the above algorithms are still not efficient enough, especially when dealing with complex sequence data and low utility thresholds. Gan \textit{et al.} \cite{gan2020fast} integrated two pruning strategies, Look Ahead Removing (LAR) and Irrelevant Item Pruning (IIP) strategies, with the developed HUSP-ULL algorithm, which significantly improved the mining efficiency of the HUSPM problem when processing sequence data.

\section{Definitions and Problem Statement}
 \label{sec:preliminaries}
 
In this section, we detail the important definitions and notations used in this paper. Then, the problem statement of HUSPM is presented.

Let $I$ = $\{i_{1}$, $i_{2}$, $\ldots$, $i_{n}\}$ be a set of all distinct items in the database. An itemset $e$ is a subset of $I$. A sequence $s$ is an ordered list of itemsets (also called elements). Without loss of generality, the items within each element are sorted alphabetically, and the "$\prec$" is used to represent that one item occurs before another item in an element. Furthermore, the number of elements in a sequence $s$ is its size; the total number of items in $s$ is its length; and a sequence with a length of $k$ is known as a $k$-sequence. For example, a sequence $s$ = $\langle\{a\ b\}$, $\{b\ c\ d\}$, $\{a\ e\}\rangle$ composed of seven items and three elements. Then, the size of $s$ is three, and it's called a $7$-sequence for its length is seven. The sequence $t$ = $\langle$ $e_1$, $e_2$, $\ldots$, $e_p$$\rangle$ is a subsequence of $s$ = $\langle$$e_1'$, $e_2'$, $\ldots$, $e_q'$$\rangle$, or $s$ contains $t$, represented as $t \sqsubseteq s$, if and only if there exists integers $1 \le j_1$ $<$ $j_2$ $<$  $\ldots $ $<j_p$ $\le q$ such that $e_1$ $ \subseteq $ $e_{j_1}'$, $e_2$ $\subseteq$  $e_{j_2}'$, $\ldots$, $e_p$ $ \subseteq$  $e_{j_p}'$.

\begin{table}[h]
	\centering
	\caption{A running example of a quantitative sequential database, $\mathcal{D}$}
	\label{table1}
	\includegraphics[width=0.6\linewidth]{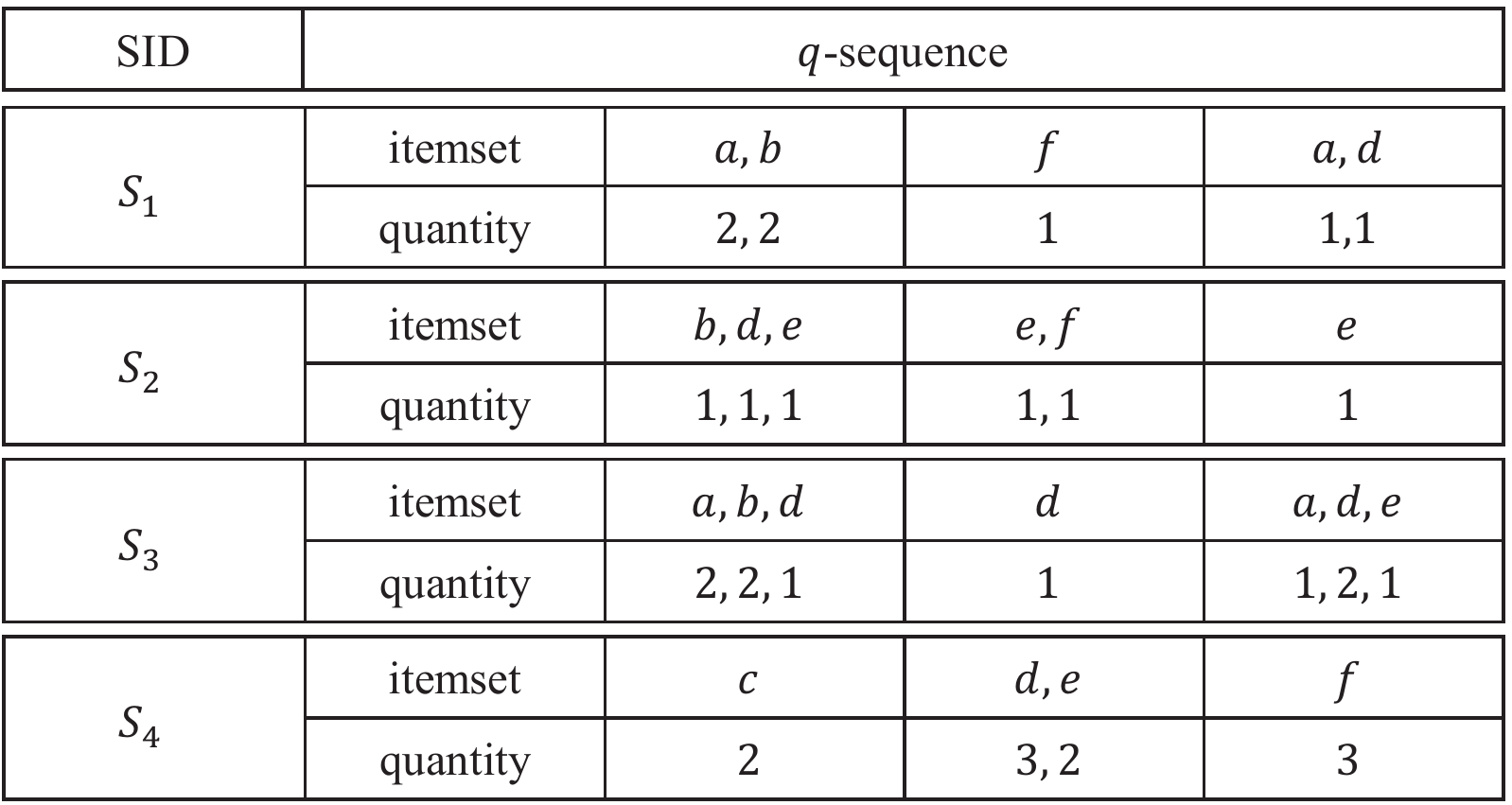}
\end{table}

\begin{definition}(quantitative sequential database and quantitative sequence)
	 A quantitative sequential database (QSDB) $\mathcal{D}$ is a set of tuples $(SID, S)$, where each sequence $S$ is associated with a unique identifier (SID). Moreover, the QSDB sequences express more information than the non-QSDB sequences. In a QSDB, each item $i$ within an element $e_j$ in a sequence $S$ = $\langle$$E_1$, $E_2$, $\ldots$, $E_p$$\rangle$ is associated with a positive integer $q(i,j,S)$, called the internal utility or quantity. Therefore, we give the sequence in a QSDB a new name, called the quantitative sequence or $q$-sequence to distinguish it from the non-QSDB whose items are not associated with quantity. Similarly, the items and itemsets are called $q$-item and $q$-itemset, respectively, in a QSDB. Each distinct item $i$ in a QSDB is associated with a unity utility (also known as external utility), which is denoted as $eu(i)$.
\end{definition}
	 
Table \ref{table1} shows a QSDB $\mathcal{D}$ with four $q$-sequence, and the unit utility (also called external utility) of each distinct item in $\mathcal{D}$ is \{$a$: 3, $b$: 1, $c$: 2, $d$: 1, $e$: 1, $f$: 1\}. They will be used as a running example in this paper.

\begin{definition}(utility of item, itemset and $q$-sequence in a QSDB)
    Given a QSDB $\mathcal{D}$, $q$-sequence $S$ = $\langle$$E_1$, $E_2$, $\ldots$, $E_n\rangle$, the utility of an item $i$ within the $j$th itemset $E_j$ is defined as $u(i,\ j,\ S)$ = $eu(i)$ $\times$ $q(i,\ j,\ S)$, where $1 \leq j$ $\leq n$. The utility of itemset $E_j$ and its subset $e_j$ $\subseteq $ $E_j$ is defined as $u(e_j,\ j,\ S)$ = $\sum_{i\in E_j\cap e_j}$$u(i,\ j,\ S)$, and the utility of the $q$-sequence $S$ is defined as $u(S)$ = $\sum_{i=1}^{n} u(E_i,\ i,\ S)$. Consequently, the utility of the QSDB $\mathcal{D}$ is $u(\mathcal{D})$ = $\sum_{\forall S\in \mathcal{D}}u(S)$.
\end{definition}

For example, in Table \ref{table1}, the utility of the item $a$ within the first $q$-itemset $E$ = $\{a\ b\}$ in $S_1$ is $u(a,1,S_1 )$ = $p(a)$ $\times$ $q(a,1,S_1 )$ = 3 $\times$ 2 = 6. Then the utility of $E$, $u(\{a\ b\},1,S _1)$ = 3 $\times$ 2 + 1 $\times$ 2 = 8, and the utility of $S_1 $ is $u(S_1 )$ = $u(\{a\ b\},1,S_1 )$ + $u(\{f\},S_1 )$ + $u(\{a\ d\},S_1 )$ = 8 + 1 + 4 = 13. Accordingly, the utility of $\mathcal{D}$, $u(\mathcal{D})$ = $u(S_1)$ + $u(S_2)$ + $u(S_3)$ + $u(S_4)$ = 13 + 6 + 16 + 12 = 47.

\begin{definition}(sequence instance)
	Given a sequence $t$ = $\langle$$e_1$, $e_2$, $\ldots$, $e_p\rangle$, $q$-sequence $S$ = $\langle E_1$, $E_2$, $\ldots$, $E_q\rangle$, $t \sqsubseteq S$ and $\exists$ $1 \le j_1$ $< j_2 < $ $\ldots$ $<j_p \le q$ that $e_1 \subseteq E_{j_1}$, $e_2$ $\subseteq$ $E_{j_2}$, $\ldots$, $e_p$ $\subseteq $ $E_{j_p}$, then it is said that the $q$-sequence $S$ has an instance of sequence $t$ at position $\langle$$j_1$, $j_2$, $\ldots$, $j_p$$\rangle$.
\end{definition}

For example, in Table \ref{table1}, for sequence $\langle\{a\},\{a\} \rangle$ and $S_1$  = $\langle\{a\ b\}$, $\{f\}$, $\{a\ d\}$$\rangle$, $\{a\}$ $\subseteq $ $\{a\ b\}$ and $\{a\} \subseteq \{a\ d\}$, thus $\langle\{a\},\{a\}$ $\rangle$ $\sqsubseteq$ $S_1 $ and $S_1 $ has an instance of sequence $\langle\{a\},\{a\} \rangle$ at $\langle 1,3 \rangle$.
	
\begin{definition}(instance utility)
	Assume $S$ has an instance of $t$ at position $\langle$$j_1$, $j_2$, $\ldots$, $j_p\rangle$, the utility of the instance of $t$ at position $\langle$$j_1$, $j_2$, $\ldots$, $j_p\rangle$ is defined as 
	\begin{equation}
		u(t,\ \langle j_1,j_2,\ldots,\ j_p\rangle ,\ S) = \sum_{i=1}^{p} u(e_i,j_i,S),
	\end{equation}	
	where $e_i$ $\subseteq$ $E_{j_i}$. 
\end{definition}

For example, in Table \ref{table1}, $u(\langle \{a\},\{a\}\rangle,\langle 1,3\rangle,S _1)$ = 6 + 3 = 9.

\begin{definition}(sequence utility) \label{sequtility}
	The utility of sequence $t$ in $q$-sequence $S$ is defined and denoted as $u(t, S)$ = $\max\{u(t,\ \langle j_1,j_2,\ldots,j_p\rangle,\  S)$ $|$ $\forall \langle j_1,j_2,\ldots,j_p$$\rangle:t \sqsubseteq \langle$ $E_{j_1}$, $E_{j_2}$, $\ldots$, $E_{j_p}\rangle \}$, that is the maximum utilities of all the instances of $t$ in $S$. Moreover, the utility of $t$ in a QSDB $\mathcal{D}$ is defined as 
	 \begin{equation}
	 	u(t, \mathcal{D}) = \sum_{\forall S \in \mathcal{D} \wedge t \sqsubseteq S} u(t,S).
	 \end{equation}	
\end{definition}

For example, in Table \ref{table1}, the utility of $\langle \{a\ d\}\rangle$ in $S _3$ is calculated as $u(\langle \{a\ d\}\rangle, S _3)$ = $\max \{u(\langle \{a\ d\}\rangle$, $\langle 1\rangle$, $S _3)$, $u(\langle \{a\ d\} \rangle$, $\langle 3\rangle$, $S _3)\}$ = $\max \{7, 5\}$ = 7. Accordingly, the utility of $\langle \{a\ d\}\rangle$ in $\mathcal{D}$ is $u(\langle \{a\ d\}\rangle, \mathcal{D})$ = $u(\langle \{a\ d\}\rangle, S _1)$ + $u(\langle \{a\ d\}\rangle, S _3)$ = 4 +  7 = 11.

\textbf{Problem statement}: Given a QSDB $\mathcal{D}$ and a user-specified minimum utility threshold $\xi$ from 0 to 1, a sequence $t$ with $u(t, \mathcal{D})$ $\ge$ $\xi$ $\times$ $u(\mathcal{D})$ is called a high-utility sequential pattern (HUSP). The high-utility sequential pattern mining (HUSPM) problem is to discover all the HUSPs in $\mathcal{D}$ with respect to the $\xi$.

For example, given the QSDB Table \ref{table1} and $\xi$ = 0.2, then the minimum utility $\xi \times$ $u(\mathcal{D})$ = 0.2 $\times$ 47 = 9.4. $\langle \{a\ d\}\rangle$ is a HUSP in $\mathcal{D}$, for $u(\langle \{a\ d\}\rangle, \mathcal{D})$ = 11 $\ge$ $9.4$, and the HUSPM problem is to find all the sequences contained in the $q$-sequences of $\mathcal{D}$ with utilities no less than 9.4.

\section{Proposed Algorithm}    \label{sec:method}

This section describes the proposed algorithm, HUSP-SP, for addressing the HUSPM problem by generating and testing the database's promising subsequences (patterns). Utilizing the newly proposed upper bound TRSU and the TRSU based EP pruning strategy, HUSP-SP only needs to generate and test a few patterns. Besides, the newly designed running data structure, called seqPro, significantly reduces the memory usage of an algorithm and facilitates the utility computation process. In general, HUSP-SP compacts the utility and sequence information of the QSDB into memory by using the seq-array structure called seqPro. Then HUSP-SP first finds the 1-sequences and then recursively projects the seqPro by prefix to find the more extended patterns. The entire searching process forms a lexicographic $q$-sequence (LQS)-tree.

\subsection{Search Space}

To avoid generating patterns that do not appear in the database and repeatedly testing the same patterns, HUSP-SP adopts the pattern-growth and projection database methods \cite{han2001prefixspan}, which means it starts from patterns composed of a single item and finds larger patterns by recursively appending items to discovered patterns. The LQS-tree \cite{yin2012uspan, wang2016efficiently} is used to formally describe the search space of HUSP-SP. An LQS-tree is shown in Fig. \ref{LQS-tree}, where each node represents a pattern and the lexicographically ordered child nodes are generated by first applying I-Extension and then S-Extension with the corresponding available 1-sequences, respectively.

\begin{definition}(I-Extension and S-Extension \cite{han2001prefixspan, yin2012uspan})
	Let $t$ = $\langle$$E_1$, $E_2$, $\ldots$, $E_p$$\rangle$, extension is the operation appending a sequence $w$ = $\langle$$F_1$, $F_2$, $\ldots$, $F_q\rangle$ to the end of $t$. Given $\forall i \in E_p$, $\forall i' \in F_1$, $i \prec i'$, I-Extension is defined as $t \diamondsuit_i w$ = $\langle$$E_1$, $E_2$, $\ldots$, $E_p$ $\cup$ $F_1$, $F_2$, $\ldots$, $F_q\rangle$. The S-Extension is defined as $t\diamondsuit_s w$ = $\langle$$E_1$, $E_2$, $\ldots$, $E_p$, $F_1$, $F_2$, $\ldots$, $F_q\rangle$. Additionally, notation $\diamondsuit$ can represent either I-Extension or S-Extension. 	
\end{definition}

For example, $\langle \{a\}, \{a\ b\}\rangle$ $\diamondsuit_i$ $\langle \{e\},\{a\ c\}\rangle$ = $\langle \{a\},\{a\ b\ e\},\{a\ c\} \rangle$; $\langle \{a\}, \{a\ b\}\rangle$ $\diamondsuit_s$ $\langle \{e\},\{a\ c\}\rangle$ = $\langle \{a\},\{a\ b\},\{e\}$, $\{a\ $ $c\} \rangle$, and $\langle \{a\}, \{a\ b\}\rangle$ $\diamondsuit $ $\langle \{e\},\{a\ c\}\rangle$ equals $\langle \{a\}, \{a\ b\}\rangle$ $ \diamondsuit_i $ $\langle \{e\},\{a\ c\}\rangle$ or $\langle \{a\}, \{a\ b\}\rangle$ $\diamondsuit_s $ $\langle \{e\},\{a\ c\}\rangle$. Note that $\langle \{a\}$, $\{a\ b\}\rangle $ $ \diamondsuit_i$ $\langle\{b\ d\}\rangle$ is forbidden for $b \prec b$ is not hold. 
 
Similar to the previous HUSPM algorithms \cite{yin2012uspan,yin2013efficiently,gan2020proum}, HUSP-SP traverses the LQS-tree in a depth-first manner and calculates the utility of the patterns with respect to the corresponding tree nodes. As shown in Fig. \ref{LQS-tree}, HUSP-SP starts at the empty root node and first finds the 1-sequences, which make up the first layer of the LQS-tree. Then HUSP-SP turns to the $\langle\{a\}\rangle$ node, checks whether $\langle \{a\} \rangle$ is a HUSP by calculating the utility of $\langle \{a\} \rangle$, and generates $\langle \{a\} \rangle$'s possible children. The same operation will be applied to the first children $\langle \{a\ b\} \rangle$. The generation and check processes will be recursively invoked until there is no other node that should be visited.

Consequently, for HUSPM methods to be effective and efficient, they usually need to be able to handle the following two problems well:

\begin{itemize}
    \item How to find the candidate items for the current testing pattern to generate an extended pattern and calculate the utility of extended patterns efficiently?
    
    \item How to reduce the search space?
\end{itemize}

\begin{figure*}[h]
    \centering
    \includegraphics[width=0.8\linewidth]{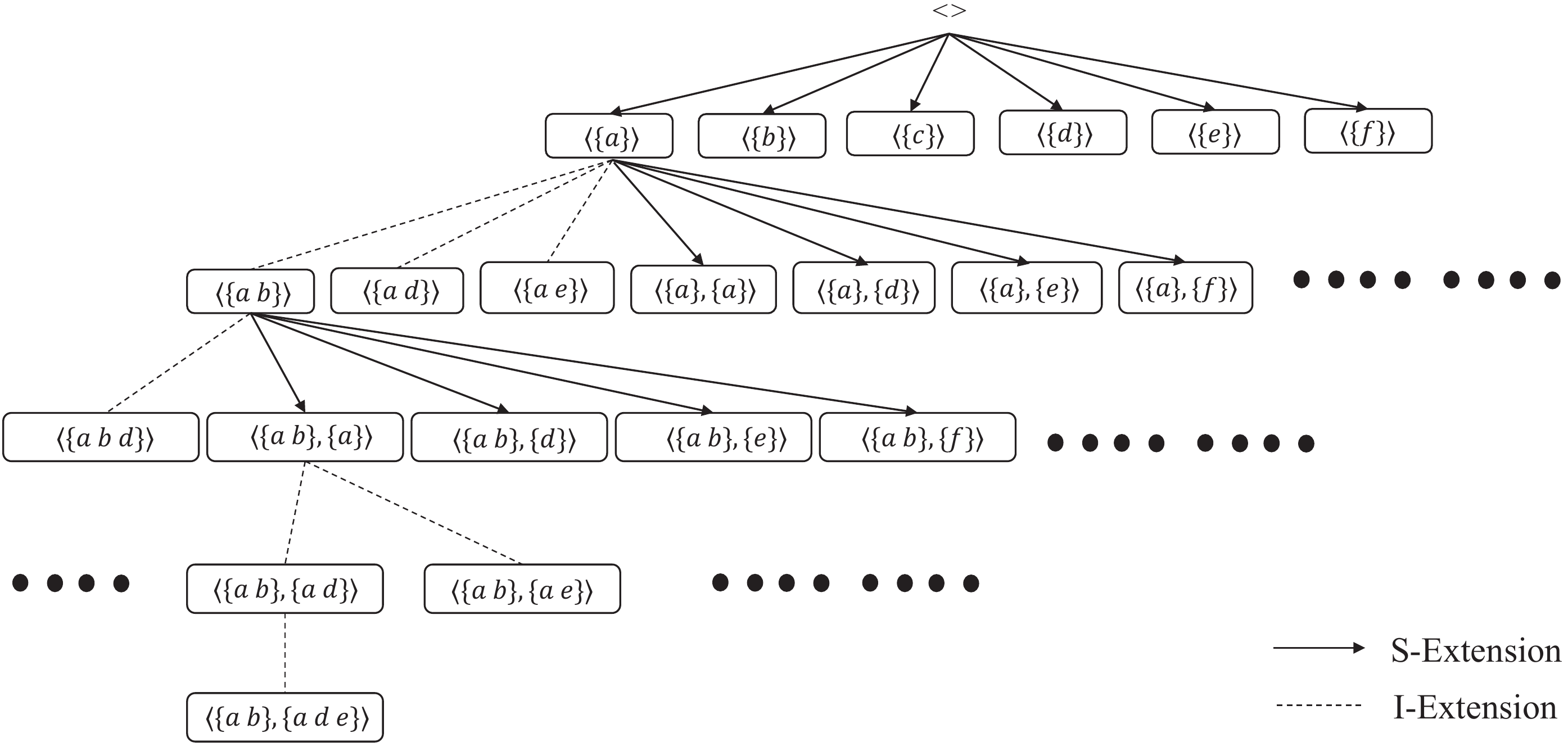}
    \caption{Partial of the LQS-tree for HUSPM with $\mathcal{D}$ in Table \ref{table1}}
    \label{LQS-tree}
\end{figure*}

\subsection{Pattern Generation and Utility Calculation}

Firstly, some necessary definitions and basic data structures for facilitating the description of the detailed methods are presented. Given a $k$-length $q$-sequence $S$, the items within $S$ are indexed from $1$ to $k$ by their sequential orders. For example, in Table \ref{table1}, $S_1$ is indexed from 1 to 5, and the item name and quantity of the $q$-item with index 3 are $f$ and $1$, respectively.

\begin{definition}(extension item and extension position \cite{wang2016efficiently})
	Given sequence $t$ = $\langle$$E_1$, $E_2$, $\ldots$, $E_p\rangle$, $q$-sequence $S$ = $\langle$$E_1'$, $E_2'$, $\ldots$, $E_q'$$\rangle$, and $S$ has an instance of $t$ at $\langle$$j_1$, $j_2$, $\ldots$, $j_p$$\rangle$. Then, the last item of $E_p$ is called the extension item, and $j_p$ is called an extension position of $t$ in $S$. In addition, the index of the extension item is denoted as $I(t,j_p)$.
\end{definition}

For example, in Table \ref{table1}, $S_3$ has three instances of sequence $\langle\{d\},\{d\} \rangle$ at $\langle1,2 \rangle$, $\langle1,3 \rangle$ and $\langle2,3 \rangle$ with common extension item $d$, and the corresponding extension positions are 2, 3, and 3, separately. Besides, the index of the extension item $d$ within extension position 3 is $I(\langle\{d\},\{d\} \rangle,3)$, and it equals 6 since it's the $6th$ $q$-item in $S_3$.

\begin{definition}(sequence utility with extension position \cite{wang2016efficiently}) \label{exutility}
	The utility of $t$ with extension position $p$ in $S$ is defined as the maximum utility of any instance of $t$ whose extension position is $p$. It is denoted as
	\begin{equation}
		u(t,\ p,\ S) = \max \{u(t,\ \langle j_1,j_2,\ldots,p \rangle,\ S) | \forall \langle j_1,j_2,\ldots,p\rangle:t \sqsubseteq \langle E'_{j_1},E'_{j_2},\ldots,E'_{_p}\rangle\}.
	\end{equation}
\end{definition}

\begin{definition}(remaining $q$-sequence \cite{yin2012uspan, wang2016efficiently})
	Given a $p$-length $q$-sequence $S$ and an item index $q$, $q$ $<$ $p$, the subsequence from the item at index $q+1$ to the end of $S$ denoted as $S/$$_q$, is called the remaining sequence of $S$ with respect to item index $q$.
\end{definition}

For example, in Table \ref{table1}, $u(\langle\{d\},\{d\} \rangle, 3, S_3  )$ = $\max \{u(\langle\{d\},\{d\} \rangle, \langle1,3 \rangle ,S_3 )$, $u(\langle\{d\},\{d\} \rangle, \langle2,3 \rangle ,S_3)\} $ = $\max \{3,3\}$ = 3. $S_3$ has an instance of $\langle \{a\ b\} \rangle$ at $\langle 1 \rangle$, then $S_3$/$_{I(\langle \{a\ b\}\rangle,1)}$ = $S_3/_2$ = $\langle$ $\{d\}$, $\{d\}$, $\{a\ d\ e\}$$\rangle $.

\begin{definition}(Sequence-array) \label{seqA}
     Given a $q$-sequence $S$ = $\langle$$E_1$, $E_2$, $\ldots$, $E_n\rangle$ with length $k$, the sequence-array (seq-array) of $S$ has four length $k$ arrays for storing the information of each item (\textit{item array} for item name, \textit{utility array} for item utility, \textit{remaining-utility array} for the utility of the remaining $q$-sequence with respect to current item index, and \textit{element-index array} for saving the index of the current element index, i.e., the index of first item of current element). In addition, the \textit{item-indices table} field of the seq-array records the indices of each distinct item within $S$. For instance, the seq-array of $S_1$ in Table \ref{table1} is shown in Fig. \ref{seq-array}.
\end{definition}

\begin{figure*}[h]
    \centering
    \includegraphics[width=0.7\linewidth]{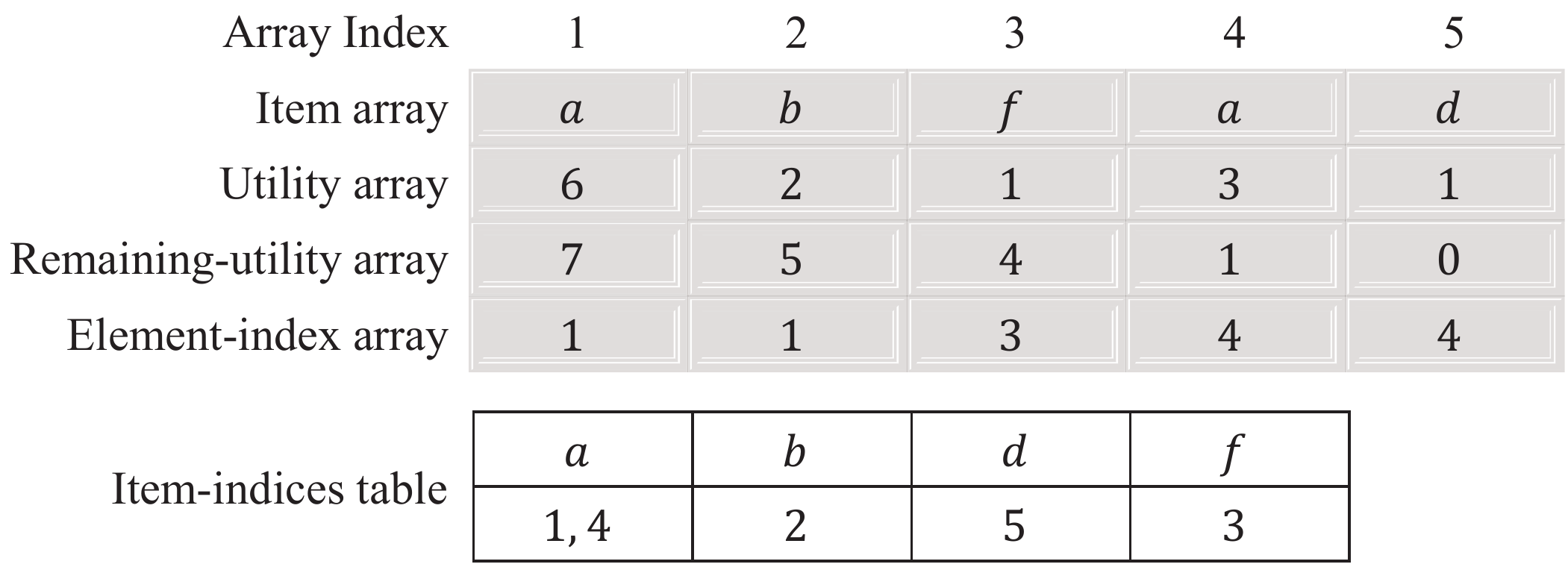}
    \caption{The seq-array of $S_1$ in Table \ref{table1}}
    \label{seq-array}
\end{figure*}

\begin{definition}(Extension-list)
	Assuming $q$-sequence $S$ of QSDB $\mathcal{D}$ has $k$ instances of sequence $t$ with extension positions $p_1$, $p_2$, $\ldots$, $p_k$, where $p_1$ $<$ $p_2$ $<$ $\ldots$ $<$ $p_k$. The extension-list of $t$ in $S$ consists of $k$ elements, where the $i$th element contains the following fields:
	\begin{itemize}
		\item Field \emph{acu} is the utility of $t$ with extension position $p_i$. 
		\item Field \emph{exIndex} is the $I(t,p_i)$ (the index of the extension item with extension position $p_i$).
	\end{itemize}
\end{definition}

For example, in Table \ref{table1}, $S_3$ has three instances of sequence $\langle\{d\},\{d\} \rangle$ at $\langle1,2 \rangle$, $\langle1,3 \rangle$ and $\langle2,3 \rangle$. Thus, the extension-list of $\langle\{d\},\{d\} \rangle$ in $S_3$ has two elements ``\textit{acu}: $u(\langle\{d\},\{d\} \rangle, 2, S_3)$, \textit{exIndex}: 4", and ``\textit{acu:} $u(\langle\{d\},\{d\} \rangle, 3, S_3)$, \textit{exIndex}: 6". For the projection database of each LQS-tree node, HUSP-SP introduces the Sequence Projection (seqPro) structure to represent the $q$-sequence. The seqPro structure is composed of two fields:
\begin{itemize}
    \item Seq-array: This is a pointer to the original seq-array of the $q$-sequence.
    
    \item Extension-list: It is a list that records the indices of extension items and the sequence utilities with extension positions of the current node pattern.
\end{itemize}

Compared to the projection database structure, such as UL-lists proposed in HUSP-ULL \cite{gan2020fast}, the seqPro is more refined in modeling the $q$-sequence. Furthermore, with the newly introduced item-index head table, seqPro avoids the problem of storing lots of null values in UL-lists. Additionally, the introduced Extension-list makes seqPro more useful in facilitating pattern generation and utility calculation.

According to Definition \ref{sequtility}, the utility of a pattern in a $q$-sequence is the maximum utility of all the pattern instances in the $q$-sequence, and the pattern utility is the sum of all the pattern utilities of $q$-sequences in the QSDB that contain the pattern. Obviously, it is time-consuming for the utility calculation for each candidate pattern to find all the instances by scanning the sequences, then compute the utility of each instance, select the maximum utilities, and sum them up at last. Thanks to pattern growth and the database projection methods, the utility computation complexity of child nodes is greatly reduced by recording the location and utility of the parent node pattern's instances (extension-list). For example, in Table \ref{table1}, the utility calculation process of sequence $\langle\{d\}, \{a\}\rangle$ can be described as follows. First, although the prefix sequence $\langle\{d\}\rangle$ is contained in every $q$-sequence in the database, only $S_3$ needs to be considered since only $S_3$ contains $\langle\{d\}, \{a\}\rangle$. Next, the utility of $\langle\{d\},\{a\}\rangle$ can be efficiently calculated in projected database $S_3$ based on the recorded instances of $\langle\{d\}\rangle$. Specifically, the sequence $\langle\{d\}, \{a\}\rangle$ can be formed by appending item $a$ to instances of $\langle\{d\}\rangle$ with extension positions 1 and 2, respectively. Notice that as there is only one extension item of $\langle\{d\}, \{a\}\rangle$ in $S_3$, $u(\langle\{d\}, \{a\}\rangle)$ = $u(\langle\{d\}, \{a\}\rangle, 3, S_3)$ = $\max\{u(\langle\{d\}\rangle, 1, S_3)$, $u(\langle\{d\}\rangle, 2, S_3)\}$ + $u(a, 3, S_3)$ = $\max\{1, 1\}$ + 3 = 4. Furthermore, the extension-list of $\langle\{d\},\{a\}\rangle$ is constructed for facilitating the candidate items discovery, utility calculation, and upper bound calculation of newly generated super-sequences.

\subsection{Search Space Pruning} \label{prune}

To reduce the search space, HUSP-SP not only adopts the utility upper bounds (prefix extension utility (PEU) \cite{wang2016efficiently}, and reduced sequence utility (RSU) \cite{wang2016efficiently}) but also proposes a new utility upper bound, called tighter reduced sequence utility (TRSU). Furthermore, based on these upper bounds, several powerful pruning strategies are designed in HUSP-SP.

\begin{definition}(PEU \cite{wang2016efficiently}) \label{peu}
	The \textit{PEU} of sequence $t$ in $q$-sequence $S$ with extension position $p$ is defined as 
	\begin{equation}
	\textit{PEU}(t,\ p,\ S) = \begin{cases}
		u(t,\ p,\ S) + u(S/_{I(t,p)}) , & \text{if } u(S/_{I(t,p)}) \ge 0; \\
		0, & \text{otherwise}.
	\end{cases}
	\end{equation}
	The \textit{PEU} of sequence $t$ in $S$ is defined as 
	\begin{equation}\label{eq3}
		\textit{PEU}(t,S) = \max \{\textit{PEU}(t,\ p,\ S) | \forall \text{extension position } p \text{ of } t \text{ in } S\}
	\end{equation}
	Note, \textit{PEU}($t,S$) = $u(S)$ when $t$ is null. Moreover, the \textit{PEU} of sequence $t$ in QSDB $\mathcal{D}$ is defined as 
	\begin{equation}
	\textit{PEU}(t, \mathcal{D}) = \sum_{\forall S \in \mathcal{D} \land t \sqsubseteq S} \textit{PEU}(t,S)
	\end{equation}
\end{definition}

For example, int Table \ref{table1}, $\textit{PEU}(\langle \{a\ b\} \rangle, \mathcal{D})$ = $u(\langle \{a\ b\} \rangle, 1, S  _1)$ + $u(S  _1 /_{I(\langle\{a\ b\} \rangle, 1 )})$ + $u(\langle \{a\ b\} \rangle, 1, S  _3)$ + $u(S_3$$/_{I(\langle\{a\ b\} \rangle, 1)})$ = 8 + 5 + 8 + 8 = 29.

\begin{theorem}\label{PEU_theorem}
	For any extension of sequence $t$ in database $\mathcal{D}$ with a sequence $w$ (not null), $t'$ = $t \diamondsuit w$, $u(t', \mathcal{D})$ $\le$ \textit{PEU}($t, \mathcal{D}$) \cite{wang2016efficiently}. 
\end{theorem}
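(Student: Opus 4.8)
The plan is to reduce the database-level inequality to a per-$q$-sequence inequality, and then prove the latter by dissecting an arbitrary instance of $t'$ into a ``$t$-part'' and a ``$w$-part'' and bounding each separately using Definitions \ref{exutility} and \ref{peu}.

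First I would note that since $w$ is not null, $t$ is a strict prefix of $t'$, so $t' \sqsubseteq S$ forces $t \sqsubseteq S$; hence $\{S \in \mathcal{D} : t' \sqsubseteq S\} \subseteq \{S \in \mathcal{D} : t \sqsubseteq S\}$. Because every internal utility is a positive integer and every external utility is positive, $u(S/_q) \ge 0$ for every index $q$, so $\textit{PEU}(t,S)\ge 0$ for all $S$ and the ``otherwise'' branch of Definition \ref{peu} never fires. Therefore
\begin{equation*}
u(t',\mathcal{D}) = \sum_{S : t' \sqsubseteq S} u(t',S), \qquad \textit{PEU}(t,\mathcal{D}) = \sum_{S : t \sqsubseteq S} \textit{PEU}(t,S) \ \ge\ \sum_{S : t' \sqsubseteq S} \textit{PEU}(t,S),
\end{equation*}
and it suffices to prove $u(t',S) \le \textit{PEU}(t,S)$ for every $S$ with $t' \sqsubseteq S$.

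Next I would fix such an $S$ and, by Definition \ref{sequtility}, an instance of $t'$ in $S$ attaining $u(t',S)$. Writing $t=\langle E_1,\ldots,E_p\rangle$ and $w=\langle F_1,\ldots,F_q\rangle$, this instance matches $E_1,\ldots,E_p$ at positions $j_1<\cdots<j_p$ and matches the rest of $w$ at later positions, with $F_1$ either merged into element $E'_{j_p}$ (I-Extension) or placed at a position $>j_p$ (S-Extension). I would split the instance utility as
\begin{equation*}
u\bigl(t', \langle j_1,\ldots,j_p,\ldots\rangle, S\bigr) \;=\; u\bigl(t,\langle j_1,\ldots,j_p\rangle, S\bigr) \;+\; \bigl(\text{total utility of the items matched by } w\bigr),
\end{equation*}
where in the I-Extension case the split uses $E_p\cap F_1=\emptyset$, which follows from the constraint $i\prec i'$ in the definition of I-Extension. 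The $t$-part is at most $u(t,j_p,S)$ by Definition \ref{exutility}, since $j_p$ is an extension position of $t$ in $S$. For the $w$-part I would argue that every item matched by $w$ occupies an index strictly greater than $I(t,j_p)$: for $F_2,\ldots,F_q$ this is immediate since they are matched at elements after $j_p$; for $F_1$ in the I-Extension case it is precisely the content of $i\prec i'$ ($i\in E_p$, $i'\in F_1$) together with the alphabetical ordering of items inside an element, which forces the items of $F_1$ to lie after the extension item at index $I(t,j_p)$. Since all item utilities are non-negative, the $w$-part is a sum over a subset of the items of $S/_{I(t,j_p)}$ and is thus at most $u(S/_{I(t,j_p)})$.

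Combining the two bounds, the chosen instance has utility at most $u(t,j_p,S)+u(S/_{I(t,j_p)}) = \textit{PEU}(t,j_p,S) \le \textit{PEU}(t,S)$ by Equation \eqref{eq3}; as the instance was maximal, $u(t',S)\le \textit{PEU}(t,S)$, and summing over the relevant $S$ yields $u(t',\mathcal{D})\le \textit{PEU}(t,\mathcal{D})$. I expect the main obstacle to be the index bookkeeping in the I-Extension case — making precise that all items of $F_1$ receive indices past $I(t,j_p)$ so that they belong to the remaining sequence $S/_{I(t,j_p)}$ — whereas in the S-Extension case the separation of the $w$-part from the extension item is essentially immediate.
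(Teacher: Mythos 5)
Your proposal is correct and follows essentially the same route as the paper's proof: reduce to the per-sequence inequality $u(t',S)\le \textit{PEU}(t,S)$, decompose the maximal instance of $t'$ into its $t$-part and $w$-part, bound these by $u(t,j_p,S)$ and $u(S/_{I(t,j_p)})$ respectively, and conclude via $\textit{PEU}(t,j_p,S)\le \textit{PEU}(t,S)$ before summing over the sequences containing $t'$. Your explicit index bookkeeping for the I-Extension case (items of $F_1$ landing strictly after $I(t,j_p)$ by the $i\prec i'$ constraint and the alphabetical ordering within elements) is a welcome tightening of the step the paper states more loosely as $w \sqsubseteq S/_{I(t,j_q)}$.
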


\begin{proof}
	Let $\mathcal{S}_t \subseteq \mathcal{D}$, $\mathcal{S}_{t'}$ $\subseteq \mathcal{D}$ be the set of $q$-sequences containing $t$ and $t'$, respectively, we have $\mathcal{S}_{t'}$ $\subseteq$ $\mathcal{S}_t$. For each $q$-sequence $S\in \mathcal{S}_{t'}$, $t'$ = $t\diamondsuit_i w$, we will proof the theorem by proofing that $u(t',S) \le \textit{PEU}(t,S)$. Assuming $u(t',S)$ = $u(t'$, $\langle$ $j_1$, $j_2$, $\ldots$, $j_p\rangle, S)$, then there is always 
	\begin{equation}\label{eq1}
		u(t',S) = u(t,\ \langle j_1,j_2,\ldots,j_q\rangle,\ S) + u(w,\ \langle j_q,j_{q+1},\ldots,j_p\rangle,\ S).
	\end{equation}
	\begin{equation} \label{eq2}
		\textit{PEU}(t,\ j_q,\ S) = u(t,j_q,S) + u(S/_{I(t,j_q)})
	\end{equation}
	Observing the Eq. \ref{eq1} and Eq. \ref{eq2}, it's obviously that $u(t,j_q,S)$ $\ge$ $u(t$, $\langle$ $j_1$, $j_2$, $\ldots$, $j_q\rangle, S)$ and the $w \sqsubseteq $$S$/$_{I(t,j_q)}$, i.e., $ u(S/_{I(t,j_q)})$ $\ge$ $u(w$, $\langle$ $j_q$, $j_{q+1}$, $\ldots$, $j_p\rangle, S)$. Therefore, $u(t',S) \le \textit{PEU}(t,j_q,S)$. According to the Eq. \ref{eq3}, $\textit{PEU}(t,S)$ $\ge$ $\textit{PEU}(t,j_q,S)$. Thus, $u(t',S)$ $\le$ \textit{PEU}($t,S)$ and $\sum_{\forall S\in \mathcal{S}_{t'}} u(t',S)\le \sum_{\forall S\in \mathcal{S}_t'}\textit{PEU}(t,S)$ $\le$ \textit{PEU}($t, \mathcal{D}) $, i.e. $u(t', \mathcal{D})$ $\le$ \textit{PEU}($t, \mathcal{D})$. In the same way, $u(t',S)$ $\le$ \textit{PEU}($t,S$) when $t'$ = $t$ $\diamondsuit_s w$.
\end{proof}

Theorem \ref{PEU_theorem} indicates that, for any QSDB $\mathcal{D}$, the candidate sequences with low PEUs (less than $\xi \times u(\mathcal{D})$) can be pruned by HUSP-SP since the utilities of all these candidate sequences and their extension sequences will be no greater than the PEUs.

\begin{definition}(RSU \cite{wang2016efficiently}) \label{RSU}
	Let sequence $t'$ be generated by one item \textit{Extension} from sequence $t$. The \textit{RSU} of $t'$ in $q$-sequence $S$ is defined as
	\begin{equation}
	\textit{RSU}(t',S) = \begin{cases}
		\textit{PEU}(t,S), & t' \sqsubseteq S; \\
		0, & \text{otherwise}.
	\end{cases}
	\end{equation}
	Accordingly, the \textit{RSU} of the sequence $t'$ in the database $\mathcal{D}$ is defined as
	\begin{equation}
	\textit{RSU}(t', \mathcal{D}) = \sum_{\forall S \in \mathcal{D} } \textit{RSU}(t',S)
	\end{equation}
	Note, if $t'$ is a single item sequence, then $\textit{RSU}(t', \mathcal{D})$ is the same to the commonly known sequence-weighted utilization (SWU) \cite{yin2012uspan}, since $SWU(t')$ = $\sum_{S\in \mathcal{D}}\{u(S)|t' \subseteq S\}$.
	
\end{definition}

For example, in Table \ref{table1}, $\textit{RSU}(\langle \{b\}, \{e\} \rangle, \mathcal{D}  )$ = \textit{RSU}($\langle \{b\}, \{e\} \rangle,S_2$) + \textit{RSU}($\langle \{b\}, \{e\} \rangle,S_3$) = \textit{PEU}($\langle \{b\} \rangle,S_2$) + \textit{PEU}($\langle \{b\} \rangle,S_3)$ = 6 + 10 = 16.

\begin{theorem}\label{RSU_theo}
	For any \textit{Extension} of sequence $t$ in database $\mathcal{D}$ with sequence $w$, which can be empty, $t'$ = $t \diamondsuit w$, $u(t', \mathcal{D})$ $\le \textit{RSU}(t, \mathcal{D})$ \cite{yin2013efficiently}.
\end{theorem}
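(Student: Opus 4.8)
The plan is to deduce Theorem~\ref{RSU_theo} from the per-sequence bound that is already established inside the proof of Theorem~\ref{PEU_theorem}, combined with a monotonicity argument over the set of containing $q$-sequences. First I would unpack the hypothesis hidden in the notation $\textit{RSU}(t,\mathcal{D})$: by Definition~\ref{RSU} this quantity is only meaningful when $t$ is itself a one-item \textit{Extension} of some sequence, so write $t = t_0 \diamondsuit a$, where $a$ is the single appended item and $t_0$ is the parent of $t$ in the LQS-tree. Then Definition~\ref{RSU} gives $\textit{RSU}(t,\mathcal{D}) = \sum_{S \in \mathcal{D},\, t \sqsubseteq S} \textit{PEU}(t_0,S)$, and this is the expression I will compare against.

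Next I would rewrite the target pattern as $t' = t \diamondsuit w = (t_0 \diamondsuit a) \diamondsuit w$; that is, $t'$ is an \textit{Extension} of $t_0$ by a \emph{non-empty} sequence, namely the one built from $a$ and possibly extended further by $w$. The point worth stressing is that this stays non-empty even when $w$ is empty, because the single item $a$ alone already makes $t'$ a proper super-sequence of $t_0$; consequently no case split on whether $w$ is empty is ever needed. I then invoke the strengthened reading of Theorem~\ref{PEU_theorem}: its proof actually shows that for every $q$-sequence $S$ with $t' \sqsubseteq S$ one has $u(t',S) \le \textit{PEU}(t_0,S)$ (applied with $t_0$ playing the role of ``$t$'' and the $a$-extended-by-$w$ sequence playing the role of ``$w$'' in that proof).

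Summing this inequality over $\mathcal{S}_{t'} := \{S \in \mathcal{D} : t' \sqsubseteq S\}$ and using Definition~\ref{sequtility} yields $u(t',\mathcal{D}) = \sum_{S \in \mathcal{S}_{t'}} u(t',S) \le \sum_{S \in \mathcal{S}_{t'}} \textit{PEU}(t_0,S)$. Because $t \sqsubseteq t'$ we have $\mathcal{S}_{t'} \subseteq \mathcal{S}_t := \{S \in \mathcal{D} : t \sqsubseteq S\}$, and every term $\textit{PEU}(t_0,S)$ is non-negative (by Definition~\ref{peu} each $\textit{PEU}(t_0,p,S)$ is either a sum of non-negative item utilities or $0$, and $\textit{PEU}(t_0,S)$ is their maximum). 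Hence enlarging the summation range from $\mathcal{S}_{t'}$ to $\mathcal{S}_t$ can only increase the sum: $\sum_{S \in \mathcal{S}_{t'}} \textit{PEU}(t_0,S) \le \sum_{S \in \mathcal{S}_t} \textit{PEU}(t_0,S) = \textit{RSU}(t,\mathcal{D})$. Chaining the two inequalities gives $u(t',\mathcal{D}) \le \textit{RSU}(t,\mathcal{D})$, as claimed.

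The argument is essentially bookkeeping, and the two places most likely to trip one up are: (i) remembering that RSU is measured against the PEU of the \emph{grandparent} $t_0$ of $t'$ (equivalently, the parent of $t$), not against $\textit{PEU}(t,S)$ --- using the wrong parent gives either a different bound or a broken chain; and (ii) applying the inclusion $\mathcal{S}_{t'} \subseteq \mathcal{S}_t$ in the correct direction, which only helps because PEU is non-negative. I expect step (i), correctly identifying which ancestor's PEU serves as the upper bound, to be the main conceptual obstacle; once that is pinned down, the rest follows immediately from Theorem~\ref{PEU_theorem} and Definitions~\ref{RSU} and~\ref{peu}.
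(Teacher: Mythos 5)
Your proposal is correct and follows essentially the same route as the paper's proof: write $t = t_0 \diamondsuit a$ (the paper's $t_p \diamondsuit i$), view $t'$ as a non-empty extension of $t_0$ so that Theorem~\ref{PEU_theorem} gives $u(t',S) \le \textit{PEU}(t_0,S)$ for every $S \in \mathcal{S}_{t'}$, then sum and use $\mathcal{S}_{t'} \subseteq \mathcal{S}_t$ together with Definition~\ref{RSU}. Your explicit remarks that the argument needs no case split on empty $w$ and that non-negativity of PEU justifies enlarging the summation range are points the paper leaves implicit, but they do not change the argument.
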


\begin{proof}
	Assuming $t$ is generated by one item \textit{Extension} from sequence $t_p$, $t$ = $t_p \diamondsuit i$, then $t'$ = $t_p \diamondsuit i \diamondsuit w$. Let $\mathcal{S}_{t_p}$ $\subseteq \mathcal{D}, \mathcal{S}_t$ $\subseteq \mathcal{D}$ and $ \mathcal{S}_{t'} \subseteq \mathcal{D}$ be the set of $q$-sequences containing $t_p$, $t$ and $t'$ respectively, we have $\mathcal{S}_{t'}$ $\subseteq \mathcal{S}_t \subseteq \mathcal{S}_{t_p}$. According to Definition \ref{RSU} and THEOREM \ref{PEU_theorem}, for each $q$-sequence $S \in \mathcal{S}_{t'}$, $\textit{RSU}(t,S)$ = $\textit{PEU}(t_p,S)$, $u(t',S) \le $ $\textit{PEU}(t_p,S)$. Hence, $u(t',S)$ $\le \textit{RSU}(t,S)$, $\sum_{\forall S\in \mathcal{S}_{t'}}u(t',S)\le \sum_{\forall S\in \mathcal{S}_t'}\textit{RSU}(t,S) \le \textit{RSU}(t, \mathcal{D})$, i.e., $u(t', \mathcal{D})\le \textit{RSU}(t, \mathcal{D})$.
\end{proof}

HUSP-SP can efficiently calculate the \textit{RSU} of the newly generated candidate pattern by adding the PEU values of $q$-sequences that still contain the new pattern. Then, according to Theorem \ref{RSU_theo}, HUSP-SP can prune the candidates with low \textit{RSU} values (less than $\xi \times u(\mathcal{D})$). However, \textit{RSU} only utilizes the \textit{PEU} of the prefix sequence for fast calculation, ignoring the help of the newly added item to reduce the utility upper bound. For instance, the sequence $\langle \{b\} \rangle$ generates $\langle \{b\}, \{e\} \rangle$ by one item extension, and $\textit{RSU}(\langle \{b\}, \{e\} \rangle, \mathcal{D})$ = \textit{PEU}($\langle \{b\} \rangle,S_2$) + \textit{PEU}($\langle \{b\} \rangle,S_3)$ = 6 + 10 = 16. \textit{PEU}($\langle \{b\} \rangle,S_3)$ = $u(\langle \{b\ d\},\{d\},\{a\ d\ e\}\rangle, S_3)$, where the subsequence between b and e, $\langle \{d\},\{d\},\{a\ d\}\rangle$, is irrelevant to the candidate pattern $\langle \{b\},\{e\}\rangle$ and $\langle \{b\},\{e\}\rangle$'s extended sequences but contributes to the utility upper bound of $\langle \{b\},\{e\}\rangle$. The same problem can be found in \textit{PEU}($\langle \{b\} \rangle,S_2$). Therefore, we designed a new tighter utility upper bound \textit{TRSU} to overcome the disadvantages of \textit{RSU} by considering the newly added item and subtracting the utility value of the irrelevant subsequence from the \textit{PEU} under certain conditions. We shall learn later that the value of $\textit{TRSU}(\langle \{b\}, \{e\} \rangle, \mathcal{D})$ is $7$, and it is much less than the value of $\textit{RSU}(\langle \{b\}, \{e\} \rangle, \mathcal{D})$, 16. 

\begin{definition}(TRSU)\label{TRSU}
	Assuming $q$-sequence $S$ of QSDB $\mathcal{D}$ has $k$ instances of sequence $t$ with extension positions $p_1$, $p_2$, $\ldots$, $p_k$, where $p_1$ $<$ $p_2 $ $<$ $\ldots$ $<$ $p_k$. Let sequence $t'$ be generated by one item \textit{Extension} from sequence $t$. The \textit{TRSU} of $t'$ in $q$-sequence $S$ is defined as
	\begin{equation}
	\textit{TRSU}(t',S) = \begin{cases}
		\textit{PEU}(t,S) - [u(S/_{I(t,p_i)}) - u(S/_{I(t',p'_1) - 1})], & t' \sqsubseteq S\text{, }\textit{PEU}(t,S) = u(t,\ p_1,\ S) + u(S/_{I(t,p_1)}); \\
		\textit{RSU}(t',S), & \text{otherwise}.
	\end{cases}
	\end{equation}
	where $p_1$ is the first extension position of $t$, $p'_1$ represents the first extension position of $t'$, and $p_i$ is the first extension position of $t$ before $p'_1$ ($p_i$ may equal to $p'_1$) in $S$. Also, the \textit{TRSU} of the sequence $t'$ in database $\mathcal{D}$ is defined as
	\begin{equation}
	\textit{TRSU}(t', \mathcal{D}) = \sum_{\forall S \in \mathcal{D} } \textit{TRSU}(t',S).
	\end{equation}		
\end{definition}

For example, in Table \ref{table1}, $\textit{TRSU}(\langle \{b\}, \{e\} \rangle, \mathcal{D})$ = $\textit{TRSU}(\langle \{b\}, \{e\} \rangle,S_2)$ + $\textit{TRSU}(\langle \{b\}, \{e\} \rangle,S_3 )$ = $ \textit{PEU}(\langle \{b\}\rangle,S_2)$ - $[u(S_2/_1)$ - $u(S_2/_3)]$ + $\textit{PEU}(\langle \{b\} \rangle,S_3)$ - $[u(S_3/2)$ - $u(S_3/6)]$ = 6 - (5 - 3) + 10 - (8 - 1) = 7. Besides, as the example in Definition \ref{RSU} states, the $\textit{RSU}(\langle \{b\}, \{e\} \rangle, \mathcal{D})$ = 16, which is even greater than twice the value of $\textit{TRSU}(\langle \{b\}, \{e\} \rangle, \mathcal{D})$. It shows the TRSU is much tighter than the RSU. Note that the TRSU and RSU are the same kinds of upper bounds that are designed based on the look-ahead strategy \cite{lin2017high}. Besides, TRSU provides an idea for further reducing the upper bounds designed based on the look-ahead strategy.

\begin{theorem}\label{TRSU_theo}
	For any \textit{Extension} of sequence $t$ in the database $\mathcal{D}$ with sequence $w$, $w$ can be empty, $t'$ = $t \diamondsuit w$, $u(t', \mathcal{D}) \le \textit{TRSU}(t, \mathcal{D})$.
\end{theorem}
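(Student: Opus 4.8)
The plan is to mimic the proof of Theorem~\ref{RSU_theo}: argue one $q$-sequence at a time, peel off the ``otherwise'' branch of Definition~\ref{TRSU} via Theorem~\ref{RSU_theo}, and attack the first branch using its defining hypothesis. Fix notation: since $\textit{TRSU}(t,\mathcal{D})$ is defined, $t$ has a parent $r$ with $t=r\diamondsuit i$ for one item $i$; for a $q$-sequence $S$ let $q_1$ be the first extension position of $r$ in $S$, $p'_1$ the first extension position of $t$ in $S$, and $p_a$ the extension position of $r$ named in Definition~\ref{TRSU} (so here $r,t,q_1,p_a$ play the roles of the definition's $t,t',p_1,p_i$), and call ``Case~A'' the first-branch hypothesis $\textit{PEU}(r,S)=u(r,q_1,S)+u(S/_{I(r,q_1)})$. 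Because $\mathcal{S}_{t'}\subseteq\mathcal{S}_t$ and $\textit{TRSU}$ is additive over $\mathcal{D}$, it suffices to prove $u(t',S)\le\textit{TRSU}(t,S)$ for every $S\in\mathcal{S}_{t'}$ together with $\textit{TRSU}(t,S)\ge 0$ for every $S$; the latter is immediate in the ``otherwise'' branch ($\textit{TRSU}=\textit{RSU}\ge 0$) and, in the first branch, follows from $q_1\le p_a$ and nonnegativity of remaining utilities. If $S\in\mathcal{S}_{t'}$ lies in the ``otherwise'' branch then $\textit{TRSU}(t,S)=\textit{RSU}(t,S)$ and $u(t',S)\le\textit{RSU}(t,S)$ is exactly the per-sequence estimate established inside Theorem~\ref{RSU_theo} for $t'=t\diamondsuit w$; so assume from now on that $S\in\mathcal{S}_{t'}$ satisfies Case~A.

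I will use repeatedly that $u(S/_a)-u(S/_b)$ for $a\le b$ is the sum of the (nonnegative) utilities of the items of $S$ at indices $a+1,\dots,b$. With this and Case~A, rewrite the target:
\[
\textit{TRSU}(t,S)=\textit{PEU}(r,S)-\bigl(u(S/_{I(r,p_a)})-u(S/_{I(t,p'_1)-1})\bigr)=u(r,q_1,S)+u\bigl(\{\text{items of }S\text{ at indices}>I(r,q_1)\}\setminus G\bigr),
\]
where $G$ is the block of items at indices $I(r,p_a)+1,\dots,I(t,p'_1)-1$ --- the irrelevant subsequence strictly between the extension item of $r$ (at position $p_a$) and the extension item $i$ of $t$ (at position $p'_1$); here one uses $I(r,q_1)\le I(r,p_a)<I(t,p'_1)$ (left: $q_1$ is earliest; right: $r$'s extension item precedes $i$ in the first instance of $t$).

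Now fix the instance of $t'$ in $S$ realizing $u(t',S)$ and split it at $i$: it is an instance of $r$ ending at some extension position $q$ of $r$ (extension item at index $I(r,q)$), then $i$ at index $\mu:=I(t,p)$ for some extension position $p\ge p'_1$ of $t$, then an instance of $w$. Everything from $i$ onward occupies indices $\ge\mu\ge I(t,p'_1)$, so it contributes at most $u(S/_{\mu-1})\le u(S/_{I(t,p'_1)-1})$ and, lying at indices $\ge I(t,p'_1)$, it is disjoint from $G$. When $q\le p_a$ the whole $r$-part occupies indices $\le I(r,q)\le I(r,p_a)$, to the left of $G$; thus the $t'$-instance decomposes into a prefix contributing at most $u(r,q,S)$ and a suffix lying inside $(S/_{I(r,q)})\setminus G$ and contributing at most $u(S/_{I(r,q)})-u(G)$, whence $u(t',S)\le u(r,q,S)+u(S/_{I(r,q)})-u(G)=\textit{PEU}(r,q,S)-u(G)\le\textit{PEU}(r,S)-u(G)=\textit{TRSU}(t,S)$ by Definition~\ref{peu} alone.

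The case I expect to be the crux is $q>p_a$: then (by the maximality built into $p_a$, i.e.\ that no extension position of $r$ lies strictly between $p_a$ and $p'_1$) the index $I(r,q)$ lies to the right of $G$, the $r$-part may now reach over the block $G$, and the naive bound $\textit{PEU}(r,q,S)\le\textit{PEU}(r,S)$ re-pays for $G$. Here one must feed Case~A through a sharper charging argument: bound $u(r,q,S)$ via the Case~A inequality $u(r,q,S)\le u(r,q_1,S)+\bigl(u(S/_{I(r,q_1)})-u(S/_{I(r,q)})\bigr)$, and track which index ranges this term, together with the suffix above $I(r,q)$, jointly make available, using again $I(r,q_1)\le I(r,p_a)$ and that the extension item at $q$ has been pushed past $I(t,p'_1)$; making this bookkeeping go through so that the $u(G)$ summand survives on the right-hand side is the delicate point. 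The I-extension versus S-extension distinction is threaded through this step as well (for an I-extension $p_a=p'_1$ and $G$ is the items of $t$'s last element lying between $r$'s last item and $i$, while for an S-extension $p_a<p'_1$ and $G$ may span several elements), but once the index arithmetic of the previous paragraph is in place it is routine. Summing $u(t',S)\le\textit{TRSU}(t,S)$ over $S\in\mathcal{S}_{t'}$ and adding the nonnegative terms for the remaining sequences gives $u(t',\mathcal{D})\le\textit{TRSU}(t,\mathcal{D})$.
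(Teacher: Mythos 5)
Your handling of the ``otherwise'' branch and of the case $q\le p_a$ is sound, but the proposal is not a proof: the case you yourself flag as the crux, $q>p_a$ (so that $q>p'_1$), is left as a hoped-for ``sharper charging argument,'' and no such argument can exist, because the per-sequence inequality $u(t',S)\le\textit{TRSU}(t,S)$ actually fails there. The block $G$ need not be irrelevant to $t'$: an instance of the parent $r=t_p$ whose extension position lies beyond $p'_1$ may pick up its earlier items inside $G$. Concretely, let $r=\langle\{a\},\{c\}\rangle$, $t=r\diamondsuit_s\langle\{b\}\rangle$, $w$ empty (the theorem allows this, so $t'=t$), and $\mathcal{D}=\{S\}$ with $S=\langle\{a\},\{c\},\{a\},\{b\},\{c\},\{b\}\rangle$, every external utility equal to $1$ and item quantities $1,1,100,1,1,1$ in index order. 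The extension positions of $r$ are $2$ and $5$, and $\textit{PEU}(r,S)=2+103=105$ is attained at $p_1=2$, so the first branch of Definition~\ref{TRSU} applies; moreover $p'_1=4$, $p_a=2$, and $G$ consists of the single item $a$ of utility $100$ at index $3$. Hence $\textit{TRSU}(t,S)=105-(103-3)=5$, whereas the instance of $t$ at $\langle 3,5,6\rangle$ gives $u(t,S)=102$. So in the case $q>p_a$ the summand $u(G)$ cannot be made to ``survive on the right-hand side'': the $r$-part of the utility-maximizing instance can consume $G$ itself, and only your $q\le p_a$ argument (plus the reduction to Theorem~\ref{RSU_theo} in the ``otherwise'' branch) is salvageable.

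You should also not expect to find the missing step in the paper: its proof of Theorem~\ref{TRSU_theo} covers this situation only by asserting that the sub-$q$-sequence between index $I(r,p_a)$ and $I(t,p'_1)$ ``is irrelevant to any $t'$'' (i.e., that no subsequence of $t'$ occurs there), which is precisely the claim the example above violates. In other words, your instinct about where the difficulty sits is correct, but the resolution is not better bookkeeping: with Definition~\ref{TRSU} as written, \textit{TRSU} is not a valid upper bound on $u(t',\mathcal{D})$ unless one adds a hypothesis excluding parent instances with extension position beyond $p'_1$ that intersect $G$ (for instance, by restricting the subtracted quantity to those items of $G$ that occur in no such instance), or otherwise weakens the subtracted term. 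As it stands, a complete and correct proof of the statement cannot be obtained along either your route or the paper's.
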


\begin{proof}
	Assuming $t$ is generated by one item \textit{Extension} from sequence $t_p$, $t$ = $t_p \diamondsuit i$, then $t'$ = $t_p \diamondsuit i \diamondsuit w$. Let $\mathcal{S}_{t_p}$ $\subseteq \mathcal{D}$, $\mathcal{S}_t \subseteq \mathcal{D}$ and $\mathcal{S}_{t'}$ $\subseteq \mathcal{D}$ be the set of $q$-sequences containing $t_p$, $t$ and $t'$ respectively, we have $\mathcal{S}_{t'}$ $\subseteq \mathcal{S}_t$ $\subseteq \mathcal{S}_{t_p}$. According to Definition \ref{TRSU}, there are two scenarios for the computation of $\textit{TRSU}(t,S)$ for each $q$-sequence $S \in \mathcal{S}_{t'}$. On the one hand, when $\textit{PEU}(t_p,S)$ = $u(t_p,\ p_1,\ S)$ + $u(S/_{I(t_p,p_1)})$, $\textit{TRSU}(t,S)$ = $\textit{PEU}(t_p,S)$ - $[u(S/_{I(t_p,p_i)})$ - $u(S/_{I(t,p'_1) - 1})]$, where $p_1$ is the first extension position of $t_p$, $p'_1$ is the first extension position of $t$, and $p_i$ is the first extension position of $t_p$ before $p'_1$. According to THEOREM \ref{RSU_theo}, $u(t',S) \le \textit{PEU}(t_p,S)$, and it can be observed that the sub $q$-sequence of $S$ from index $I(t_p,p_i) + 1$ to $I(t,p'_1) - 1$ is irrelevant to any $t'$ (there will be no subsequence of $t'$ contained in this sub $q$-sequence). Therefore, $u(t',S)$ $\le$ $\textit{PEU}(t_p,S)-[u(S/_{I(t_p,p_i)})$ - $u(S/_{I(t,p'_1) - 1})]$, since the $u(S/_{I(t_p,p_1)})$ contains the utility of the irrelevant sub $q$-sequence, that's to say, $u(t',S) \le $ \textit{TRSU}($t,S$). On the other hand, \textit{TRSU}($t,S$) = \textit{RSU}($t,S$) = \textit{PEU}($t_p,S$), and we have $u(t',S)$ $\le$ $\textit{PEU}(t_p,S)$. Hence, $u(t',S)$ $\le $ \textit{TRSU}($t,S$) also holds. In conclusion, $u(t',S)$ $\le $ \textit{TRSU}($t,S$), $\sum_{\forall S\in \mathcal{S}_{t'}}u(t',S)\le \sum_{\forall S\in \mathcal{S}_{t'}}\textit{TRSU}(t,S)$ $\le \textit{TRSU}(t, \mathcal{D})$, i.e., $u(t', \mathcal{D})\le \textit{TRSU}(t, \mathcal{D})$.
\end{proof}

HUSP-SP can also efficiently calculate the \textit{TRSU} of newly generated candidate patterns with the help of the \textit{remaining-utility array}. According to THEOREM \ref{TRSU_theo}, HUSP-SP can prune the candidates with low TRSU values (less than $\xi$ $\times$ $u(\mathcal{D})$). Based on the utility upper bounds, the pruning strategies are described as follows. First, the IIP (irrelevant items pruning) strategy \cite{gan2020fast} is adopted by HUSP-SP, and it is defined based on RSU in this paper as follows. \textit{IIP Strategy:} Given a sequence $t$, and any item $i$ available for extension, if $\sum_{S\in \mathcal{D}}$ \textit{RSU}($t \diamondsuit i)$ is less than the minimum utility threshold ($\xi$ $\times$ $u(\mathcal{D})$), then $i$ can be removed from the seqPro structure of $t$ and $t$'s extension sequences. The correctness of the IIP strategy and proof process have been given in HUSP-ULL \cite{gan2020fast}. Then, according to THEOREM \ref{TRSU_theo}, an EP (early pruning) strategy is proposed to discard unpromising candidate items early. \textit{EP Strategy:} Given a sequence $t$ and any item $i$ available for extension, two situations are considered: 1). If $i$ is an I-Extension candidate item, and \textit{TRSU}($t \diamondsuit_i, \mathcal{D})$ $<$ $\xi$ $\times$ $u(\mathcal{D})$, then $i$ should be discarded. 2). If $i$ is an S-Extension candidate item, and \textit{TRSU}($t \diamondsuit_s i, \mathcal{D}) $ $<$ $\xi$ $\times$ $u(\mathcal{D})$, then $i$ should be discarded.

\subsection{HUSP-SP Algorithm}

Based on the seq-array, the designed TRSU, and the EP strategy, the proposed HUSP-SP algorithm can be stated as follows. Algorithm \ref{main} gives the main steps of HUSP-SP. The algorithm first scans the quantitative sequential database $\mathcal{D}$ to construct the storage structure, that is, the seq-array of each $q$-sequence $S \in \mathcal{D}$ (line 1). It also accumulates the utility of the $q$-sequences and gets the $u(\mathcal{D})$ after scanning the database (line 1). Then, the projected database \textit{seqPro}($\langle \rangle$) of the empty sequence is constructed (lines 3-4). Following that, the empty sequence is treated as the prefix, and HUSP-SP begins the depth-first search with the built projected database by invoking the \textbf{PatternGrowth} procedure (line 5).

\begin{algorithm}[h]
	\caption{HUSP-SP algorithm}
	\label{main}
	\KwIn{
		$\mathcal{D}$: a  quantitative sequential database;	\textit{UT}: a utility table with external utility values for distinct items in $\mathcal{D}$; $\xi$: a minimum utility threshold.}
	\KwOut{
		the set of \textit{HUSPs}.}
	\BlankLine
	scan $\mathcal{D}$ to build the seq-array for each $S \in \mathcal{D}$, and calculate $u(\mathcal{D})$\;
	\textit{seqPro}($\langle \rangle$).\textit{seq-array} $ \leftarrow \{\text{seq-array of } S\text{ | } S \in \mathcal{D}\}$\;
	\textit{seqPro}($\langle \rangle$).\textit{exList} $\leftarrow$ \textit{NULL}\;
	\textit{HUSPs} $\leftarrow \varnothing$\;
	call \textbf{PatternGrowth}($\langle \rangle$, \textit{seqPro}($\langle \rangle$), \textit{HUSPs})\;
	\Return{\textit{HUSPs}}
	
\end{algorithm}

\begin{algorithm}[h]
	\caption{PatternGrowth(\textit{prefix}, \textit{seqPro(prefix)}, \textit{HUSPs})}
	\label{pgrowth}
	\textit{seqPro(prefix)} $\leftarrow$ \textit{IIP}\textit{(seqPro}(\textit{prefix})) \tcp{measured by IIP}
	scan \textit{seqPro}(\textit{prefix}) to get I-Extension items (iList) and S-Extension items (\textit{sList}) of \textit{prefix} \tcp{measured by EP}
	\For{\rm each item $i$ $\in$ \textit{iList}}{
		call \textbf{UtilityCalculation}(\textit{prefix}$\diamondsuit_i i$, \textit{seqPro}(\textit{prefix}), \textit{HUSPs});
	}
	\For{\rm each item $i\in sList$ }{
		call \textbf{UtilityCalculation}(\textit{prefix}$\diamondsuit_s i$, \textit{seqPro}(\textit{prefix}), \textit{HUSPs});
	}
\end{algorithm}

\begin{algorithm}[h]
	\caption{UtilityCalculation(\textit{prefix}$'$, \textit{seqPro}(\textit{prefix}), \textit{HUSPs})}
	\label{utilComput}
	\textit{seqPro(prefix$'$).seq-array} $\leftarrow \{\text{seq-array of } S\text{ | } \textit{prefix}'\subseteq$ $S \land$ $S \in $ \textit{seqPro(prefix).seq-array}\}\;
	calculate $u$(\textit{prefix}$'$), \textit{PEU}(\textit{prefix}$'$) and construct \textit{seqPro}(\textit{prefix}$'$).\textit{exList}\;
	\If{$u$(\textit{prefix}$'$) $\ge$ $\xi$ $\times$ $u(\mathcal{D})$}{
		\textit{HUSPs} $\leftarrow $ \textit{HUSPs} $\cup $ \textit{prefix}$'$\;
	}
	\If{\textit{PEU}(\textit{prefix}$'$) $\ge$ $\xi$ $\times$ $u(\mathcal{D})$}{
		call \textbf{PatternGrowth}(\textit{prefix}$'$, \textit{seqPro(prefix$'$)}, \textit{HUSPs})\;
	}
\end{algorithm}

The \textbf{PatternGrowth} procedure (cf. Algorithm \ref{pgrowth}) shows the depth-first search process, i.e., the pattern growth process through the I-Extension and S-Extension operations. The algorithm first removes the irrelevant items from the projected database by applying the IIP strategy, and then updates the projected database, \textit{seqPro}(\textit{prefix}) (line 1). Then, the reduced projected database \textit{seqPro}(\textit{prefix}) is scanned to get the promising I-Extension items (\textit{iList}) and S-Extension items (\textit{sList}) of \textit{prefix} by EP strategy (line 2). For each item $i$ of iList and sList, the algorithm generates a new one length longer pattern by applying I-Extension and S-Extension with $i$, respectively. Furthermore, the newly generated pattern, such as \textit{prefix} $\diamondsuit_s i$, is evaluated by calling the \textbf{UtilityCalculation} procedure (lines 3-8).

The \textbf{UtilityCalculation} procedure (cf. Algorithm \ref{utilComput}) first establishes the seq-array field of the projected database \textit{seqPro}(\textit{prefix}$'$) based on the \textit{seqPro}(\textit{prefix}) (line 1). Then, the utility and PEU value of \textit{prefix}$'$ are calculated based on the constructed \textit{seqPro}(\textit{prefix}$'$); meanwhile, the Extension-List field of the \textit{seqPro}(\textit{prefix}$'$) is built (line 2). If the $u$(\textit{prefix}$'$) is not less than the minimum utility $\xi$ $\times$ $u(\mathcal{D})$, the \textit{prefix}$'$ is added to the \textit{HUSPs} (lines 3-5). Furthermore, if the \textit{PEU(prefix$'$)} is not less than $\xi$ $\times$ $u(\mathcal{D})$, the \textbf{PatternGrowth} procedure is called to mine the HUSPs prefixed with \textit{prefix}$'$ (lines 6-8). Note, the utilization of PEU can result in missing patterns when the pruning condition ``if \textit{PEU(prefix$'$)} $<$ $\xi$ $\times$ $u(\mathcal{D})$ then stop the mining branch with \textit{prefix}$'$ as parent node" is executed before displaying the result sequence ``if ($u(\textit{prefix}')$ $\ge$ $\xi$ $\times$ $u(\mathcal{D})$) then output \textit{prefix}$'$" \cite{truong2019survey}. Finally, HUSP-SP terminates when there are no newly generated candidate patterns, and returns the set of HUSP, \textit{HUSPs}.

To better illustrate the proposed algorithm, we give a running example below. Considering the running example (Table \ref{table1} and Table \ref{table2}), and $\xi$ = 0.5. After the first scanning of the database, we build the seq-arrays for $q$-sequences of the database and obtain that the threshold value is $\xi$ $\times$ $u(\mathcal{D})$ = 23.5, the \textit{SWU} \cite{yin2012uspan} value of $\langle \{a\} \rangle$, $\langle \{b\} \rangle$, $\langle \{c\} \rangle$, $\langle \{d\} \rangle$, $\langle \{e\} \rangle$, $\langle \{f\} \rangle$ are 29, 35, 12, 47, 34, 31. Thus, the item $c$ is permanently deleted from the database as its \textit{SWU} value is less than the threshold value. Then we get the \textit{TRSU} value of $\langle \{a\} \rangle$, $\langle \{b\} \rangle$, $\langle \{d\} \rangle$, $\langle \{e\} \rangle$, $\langle \{f\} \rangle$ are 29, 23, 22, 10, and 10 by scanning the original seq-arrays. So, $\langle \{a\} \rangle$ is the only promising candidate. Next, we scan the original seq-arrays again to construct the projected database (\textit{seqPros}) and calculate the utility of $\langle \{a\} \rangle$. The utility and \textit{PEU} values of $\langle \{a\} \rangle$ are $12$ and $29$, separately. Therefore, $\langle \{a\} \rangle$ is not a HUSP, and we should call the \textbf{PatternGrowth} on it for mining its extension sequences. Firstly, we get the \textit{RSU} value of $\langle \{a \diamondsuit e\} \rangle$ and $\langle \{a \diamondsuit f\} \rangle$ are 16 and 13 by searching the \textit{seqPro} of $\langle \{a\} \rangle$. Thus, we delete the items $e$ and $f$ from the \textit{seqPro} of $\langle \{a\} \rangle$ based on the \textit{IIP strategy}. Then, we scan the reduced \textit{seqPro}($\langle \{a\} \rangle$), and we find two \textit{I-Extension} sequences $\langle \{a, b\} \rangle$ and $\langle \{a, d\} \rangle$, whose \textit{TRSU} value are $27$ and $25$. The \textit{S-Extension} sequences of $\langle \{a\} \rangle$ include $\langle \{a\}, \{a\} \rangle$ and $\langle \{a\}, \{d\} \rangle$, whose \textit{TRSU} value are $21$ and $24$. According to the \textit{EP strategy}, we can discard the $\langle \{a\}, \{d\} \rangle$ as its \textit{TRSU} value is less than 23.5. $\langle \{a, b\} \rangle$, $\langle \{a, d\} \rangle$ and $\langle \{a\}, \{d\} \rangle$ are the promising candidates, and we call the \textbf{UtilityCalculation} for each of them. The mining process for $\langle \{a, d\} \rangle$ and $\langle \{a\}, \{d\} \rangle$ is terminated since their \textit{PEU} values are 17 and 19. The utility and \textit{PEU} values of $\langle \{a, b\} \rangle$ are 16 and 27, so $\langle \{a, b\} \rangle$ is not a HUSP. Finally, we call the \textbf{PatternGrowth} on $\langle \{a, b\} \rangle$ and get one HUSP $\langle \{a, b\}, \{a, d\}\rangle$ with a utility of 25.

\subsection{Complexity Analysis} \label{compx}

Let $|\mathcal{D}|$ denotes the number of $q$-sequences in database $\mathcal{D}$, $L$ denotes the length of the longest $q$-sequence in $\mathcal{D}$, and $|I|$ denotes the number of distinct items in $\mathcal{D}$. HUSP-SP first scans the database $\mathcal{D}$ to build the seq-arrays, whose time complexity is $O(|\mathcal{D}|L)$. The memory complexity of the seq-arrays is also $O(|\mathcal{D}|L)$ since there are $|\mathcal{D}|$ seq-arrays, and the most extended length is $L$. Then, HUSP-SP calls the recursive function \textbf{PatternGrowth} after some initial operations and finally returns the \textit{HUSPs}. Thus, the time complexity of the HUSP-SP algorithm is $O(|\mathcal{D}|L)$ + $N$ $\times$ $\textit{timeComp}\_PG$, where $N$ denotes the number of times \textbf{PatternGrowth} is called, and $\textit{timeComp}\_PG$ denotes the time complexity of \textbf{PatternGrowth}. The memory complexity of HUSP-SP is $O(|\mathcal{D}|L)$ + $H$ $\times$ $memoComp\_PG$, where H denotes the maximum depth of recursively calling \textbf{PatternGrowth}, $\textit{memoComp}\_PG$ denotes the memory complexity of \textbf{PatternGrowth}.

As for \textbf{PatternGrowth}, the procedure first needs to scan the projected database three times (lines 1-2), whose worst time complexity is $O(|\mathcal{D}|L)$. Specifically, the IIP operation first takes one scan to mark the extension items with low \textit{RSU} as irrelevant items, and then scans the projected database again to update the projected database by deleting the utility of the irrelevant items in the \textit{Remaining-utility array} (Definition \ref{seqA}). After the IIP operation, \textbf{PatternGrowth} scans the reduced projected database once to get the promising extension items, \textit{iList} and \textit{sList}. The remaining operations of \textbf{PatternGrowth} are appending each item of \textit{iList} or \textit{sList} to the prefix and calculating the utility of the generated candidate pattern (lines 3-8), whose worst time complexity is $|I|$ $\times$ $\textit{timeComp}\_UC$. Note that the $\textit{timeComp}\_UC$ represents the time complexity of \textbf{UtilityCalculation}. Thus, the worst time complexity of \textbf{PatternGrowth} is $O(|\mathcal{D}|L)$ + $|I|$ $\times$ $\textit{timeComp}\_UC$. In addition, the worst memory complexity of \textbf{PatternGrowth} is $O(|\mathcal{D}|L$ + $|I|)$. This is because the largest size of the projected database can be $|\mathcal{D}|$ when each $q$-sequence in $\mathcal{D}$ contains the candidate pattern. Moreover, the largest size of \textit{Extension-list} can be $L$ when the candidate pattern is equal to any subsequence of the same length in $\mathcal{D}$. Therefore, the worst memory complexity of the projected database can be $O(|\mathcal{D}|L)$. In addition, for marking deleted items, the IIP operation requires a global array of length $|I|$.

Additionally, the worst time complexity of the \textbf{UtilityCalculation} is $O(|\mathcal{D}|L)$. This is because the projected database may contain at most $|\mathcal{D}|$ $q$-sequences. Besides, for each $q$-sequence, calculating the utility of the new candidate pattern needs to traverse the extension-List and Item-indices table, which may contain at most $L$ elements. According to the above, the value of \textit{timeComp\_UC} is $O(|\mathcal{D}|L)$, and the value of \textit{timeComp\_PG} is $O(|\mathcal{D}|L)$ + $|I|$ $\times$ $O(|\mathcal{D}|L)$, which equals $O(|I||\mathcal{D}|L)$. Thus, the time complexity of the HUSP-SP algorithm is $O(|\mathcal{D}|L)$ + $N $ $\times$ $O(|I||\mathcal{D}|L)$, which equals $O(N|I||\mathcal{D}|L)$. Besides, the value of \textit{memoComp\_PG} is $O(|\mathcal{D}|L + |I|)$, and the memory complexity of HUSP-SP is $O(|\mathcal{D}|L)$ + $H$ $\times$ $O(|\mathcal{D}|L)$ + $ O(|I|)$, which equals $O(H|\mathcal{D}|L + |I|)$. Note that there is only one global array of length $|I|$ for marking deleted items; thus, the factor of $|I|$ is one instead of $H$. To sum up, the time complexity of the HUSP-SP algorithm is $O(N|I||\mathcal{D}|L)$, and the memory complexity is $O(H|\mathcal{D}|L + |I|)$.

It is worth noting that the number of candidate patterns (LQS-tree nodes) is $O(|I|^L)$, which is equal to the value of $N$. Furthermore, the length of the longest pattern is $L$, which equals the value of $H$. Therefore, the worst time complexity and memory complexity of the HUSP-SP algorithm are $O(|I|^2|\mathcal{D}|L^2)$ and $O(|\mathcal{D}|L^2 + |I|)$, separately. Nevertheless, the exact time and memory complexity of HUSP-SP can be much smaller than the above theoretical values, as the proposed pruning strategies can significantly reduce the search space (the number of candidate patterns).

\section{Experiments}   \label{sec:experiments}

In this section, sufficient experimental results were presented and analyzed to demonstrate the performance of the proposed algorithm. The state-of-the-art HUSPM algorithms, including USpan \cite{yin2012uspan} (replaced the SPU by PEU), ProUM \cite{gan2020proum} and HUSP-ULL \cite{gan2020fast} were selected as the baselines. All the compared algorithms were implemented in Java. All the course code and datasets are available at GitHub\footnote{https://github.com/DSI-Lab1/HUSPM}. The following experiments were conducted on a personal computer with an Intel Core i7-8700K CPU @ 3.20 GHz, a 3.19 GHz processor, 8 GB of RAM, and a 64-bit Windows 10 operating system.

\subsection{Data Description}

Five real-world datasets and one synthetic dataset were utilized in our experiment to evaluate the performance of the compared algorithms. Table \ref{features} lists the statistical characteristics of these datasets. Note that the number of $q$-sequences is denoted as $|D|$, the number of different $q$-items is denoted as $|I|$, the average/maximum length of $q$-sequences is denoted as $\textit{avg}(S)$/$\textit{max}(S)$, the average number of $q$-itemsets per $q$-sequence is denoted as \textit{\#avg(IS)}, the average number of $q$-items per $q$-itemset is denoted as \textit{\#Ele}, and the average/maximum utility of $q$-items is denoted as \textit{avg}(\textit{UI})/\textit{max}(\textit{UI}). The values of \textit{\#Ele} parameter of these datasets indicate that the Sign, Bible, Kosarak10k and Leviathan are composed of single item element based sequences, while the Yoochoose and SynDataset-160K are composed of multi-item element based sequences. The utilization of both single-item and multi-item element based sequence datasets makes the experimental results more convincing. Excluding \textit{Yoochoose}\footnote{https://recsys.acm.org/recsys15/challenge}, the other datasets can be obtained from an open-source data mining website\footnote{http://fimi.ua.ac.be/data}.

\begin{table}[h]
	\centering  
	\caption{Features of the datasets}
	\label{features}
    \begin{tabular}{lllllll}
    	\hline
    	\textbf{Dataset} & \textbf{$|\textit{D}|$} & \textbf{$|\textit{I}|$} & \textbf{$\textit{avg}(\textit{S})$} & \textbf{$\textit{max}(\textit{S})$} & \textbf{\textit{avg}(\textit{IS})} & \textbf{\textit{\#Ele}} \\ 
    	\hline 
    	Sign & 730 & 267 & 52.00 & 94 & 52.00 & 1.00  \\ 
    	Bible & 36,369 & 13,905 & 21.64 & 100 & 21.64 & 1.00  \\ 
    	SynDataset-160k & 159,501 & 7,609 & 6.19 & 20 & 26.64 & 4.32  \\ 
    	Kosarak10k & 10,000 & 10,094 & 8.14 & 608 & 8.14 & 1.00  \\ 
    	Leviathan & 5,834 & 9,025 & 33.81 & 100 & 33.81 & 1.00  \\ 
    	Yoochoose & 234,300 & 16,004 & 2.25 & 112 & 1.14 & 1.98  \\ 
        \hline
    	SynDataset-10k  & 10,000  & 7,312 & 27.11 & 213 & 6.23 & 4.35  \\ 
        SynDataset-80k  & 79,718  & 7,584 & 26.80 & 213 & 6.20 & 4.32  \\ 
        SynDataset-160k & 159,501 & 7,609 & 26.75 & 213 & 6.19 & 4.32  \\ 
        SynDataset-240k & 239,211 & 7,617 & 26.77 & 213 & 6.19 & 4.32  \\ 
        SynDataset-320k & 318,889 & 7,620 & 26.76 & 213 & 6.19 & 4.32  \\ 
        SynDataset-400k & 398,716 & 7,621 & 26.75 & 213 & 6.19 & 4.32 \\ 
        \hline 
	\end{tabular}
\end{table}

\subsection{Efficiency Analysis}

In the first experiment, the runtime performance of the proposed algorithm was compared with the state-of-the-art algorithms. Then, on six datasets, a series of experiments with various minimum utility threshold (denoted as $\xi$) settings were run, with the detailed results shown in Fig. \ref{runtime}.

\begin{figure*}[h]
	\centering
	\includegraphics[width=1\linewidth]{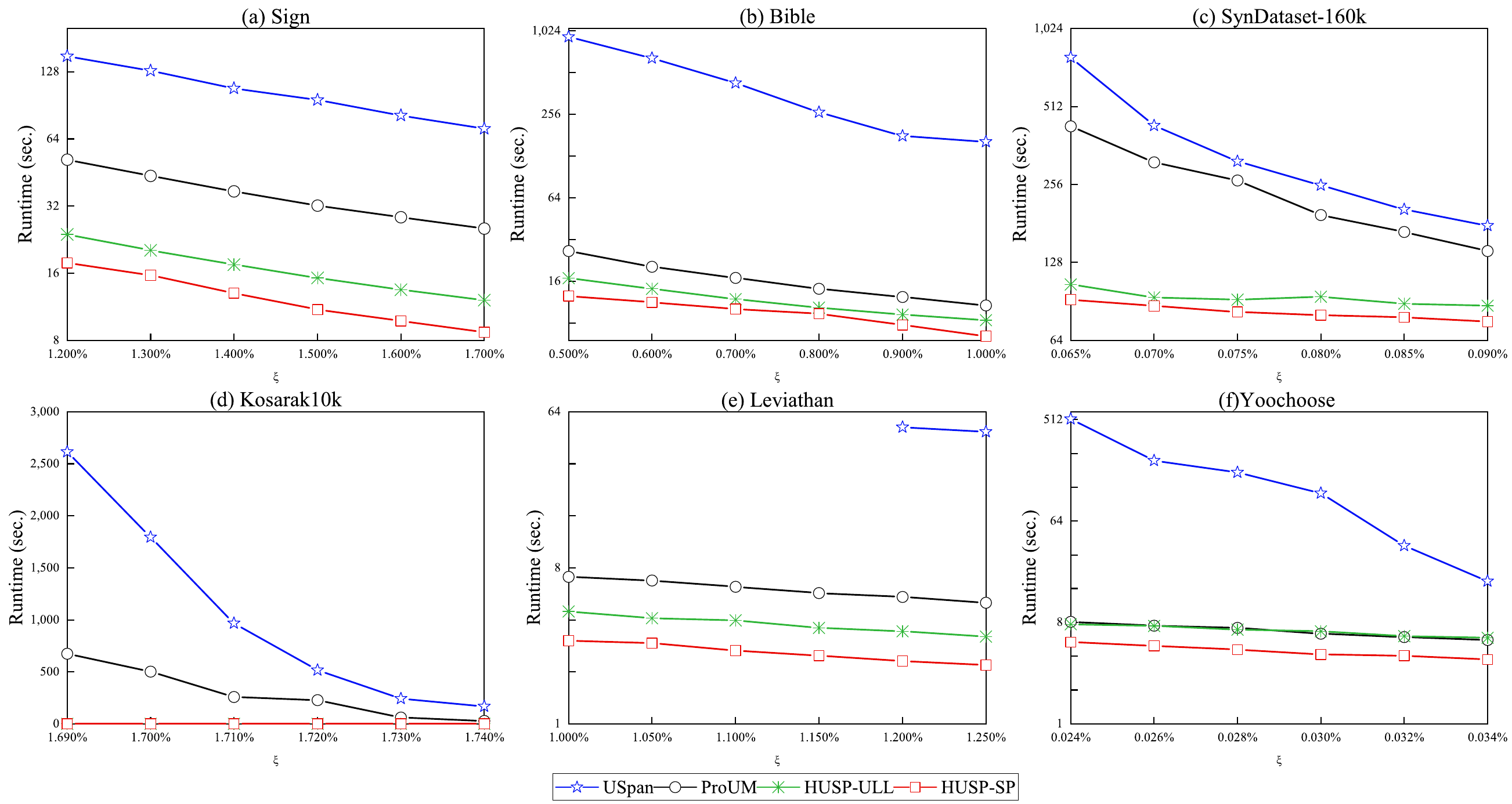}
	\caption{Execution times of the compared methods under various minimum utility thresholds}
	\label{runtime}
\end{figure*}

It is shown that the proposed HUSP-SP is faster than the other existing algorithms in all cases. Generally, HUSP-SP was a quarter faster than the state-of-the-art algorithm HUSP-ULL, one order of magnitude faster than ProUM, and two or three orders of magnitude faster than USpan. For example, except for the SynDataset-160k in Fig. \ref{runtime}(c), the runtime of HUSP-SP consumes around ten seconds, while the other algorithms may consume hundreds to thousands of seconds. Furthermore, HUSP-SP and HUSP-ULL are much more stable than ProUM and USpan in runtime efficiency when $\xi$ decreases. For instance, in Fig. \ref{runtime}(c), the runtime of USpan and ProUM increased by a hundred seconds when $\xi$ decreased by 0.00005, while HUSP-SP ran only a few seconds longer. Note that when the minimum threshold parameter is less than $1.2\%$ on the Leviathan dataset, the USpan algorithm cannot finish the experiment because it has run out of memory. In all parameter settings, HUSP-SP is superior to all existing algorithms.

\subsection{Effectiveness of Pruning Strategies}

In order to evaluate the effect of pruning strategies, this subsection investigated the generated candidate patterns and HUSPs of the compared algorithms under varying minimum utility thresholds in the six datasets. The details are shown in Fig. \ref{candidate}. Note that \textit{\#candidate} is the number of the generated candidate patterns, which must be checked, and \textit{\#HUSPs} is the number of the final HUSPs discovered by the method.

\begin{figure*}[h]
	\centering
	\includegraphics[width=1\linewidth]{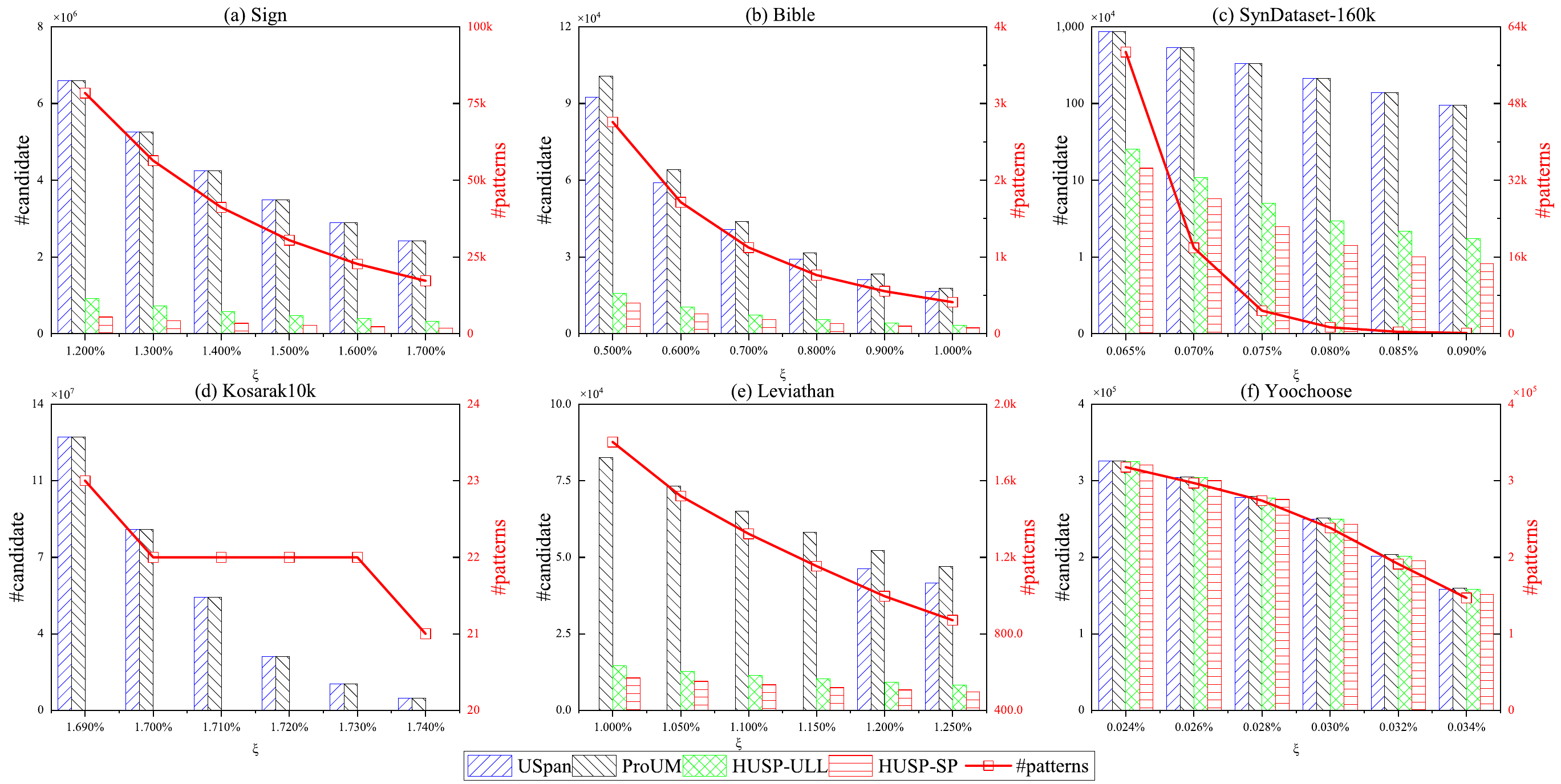}
	\caption{Experimental results of generated candidates and patterns}
	\label{candidate}
\end{figure*}

As shown in Fig. \ref{candidate}, the number of candidate patterns generated by HUSP-ULL and HUSP-SP is much smaller than the other two methods. The main reason for the huge difference in the number of candidates is that HUSP-ULL and HUSP-SP adopt the IIP strategy. With the introduction of the IIP strategy, the utility upper bounds decreased faster due to the utility deletion of irrelevant items. Therefore, the search space was much smaller when the IIP strategy worked. Besides, HUSP-SP generally generates half as many candidate patterns as HUSP-ULL. Therefore, it proves that the proposed new upper bound TRSU and the EP pruning strategy are effective.

It can also be observed that the number of candidate patterns of HUSP-SP and HUSP-ULL increased much slower than USpan and ProUM. However, Fig. \ref{candidate}(f) shows that, in the dataset \textit{Yoochoose}, the compared algorithms generated a similar number of candidate patterns. Still, from Fig. \ref{runtime} and Fig. \ref{memory}, we can find that the proposed HUSP-SP performed better in terms of runtime and memory. Therefore, the proposed projected structure \textit{seqPro} is more compact and more effective in the HUSP mining process.

\subsection{Memory Evaluation}

In this subsection, the important algorithmic measure criteria for memory are evaluated. The experiment results are shown in Fig. \ref{memory}.

\begin{figure*}[h]
	\centering
	\includegraphics[width=1\linewidth]{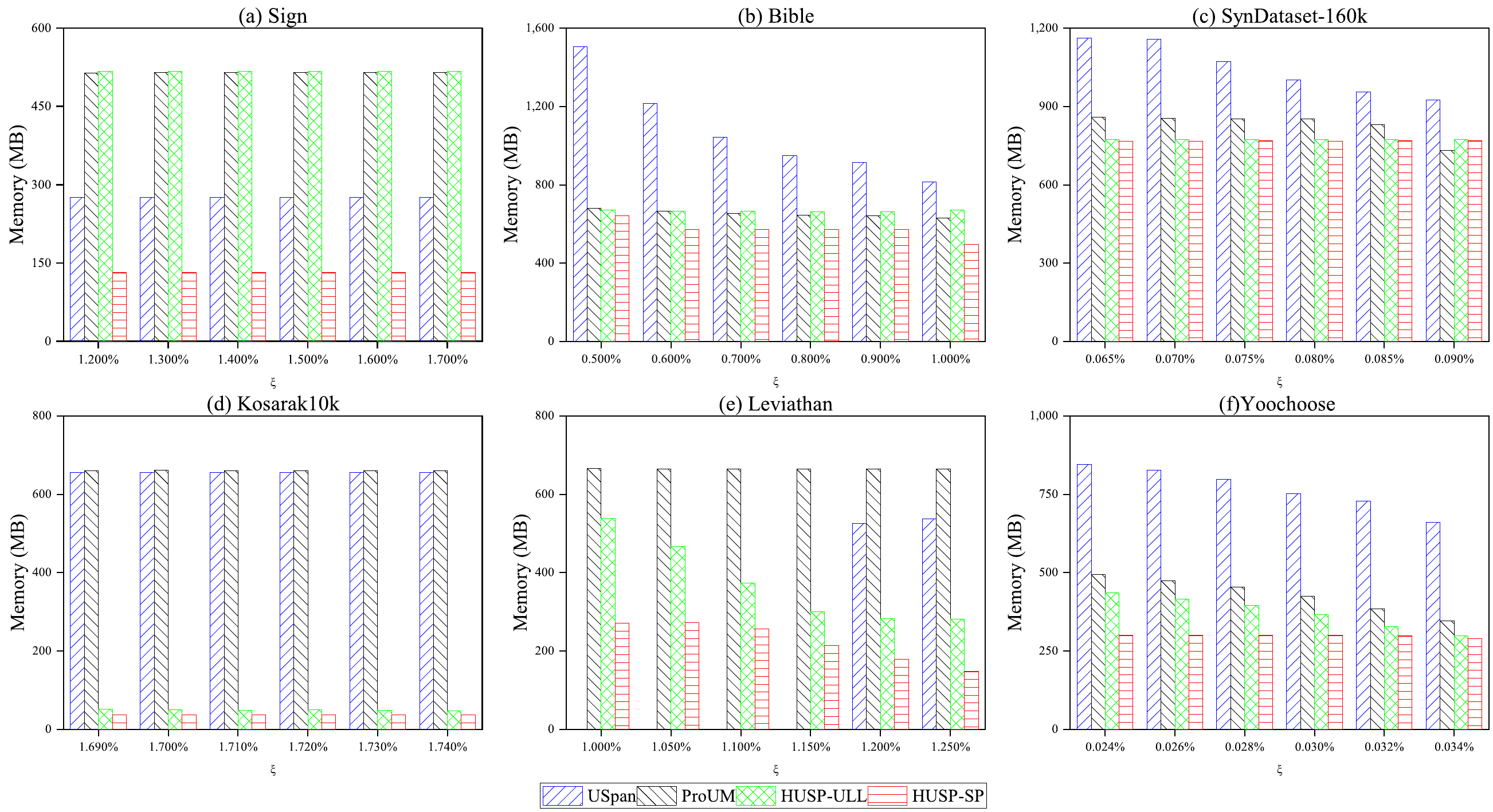}
	\caption{Memory usage results of the compared methods under various minimum utility thresholds}
	\label{memory}
\end{figure*}

HUSP-SP outperformed all of the compared algorithms in terms of memory consumption across all experiment parameter settings, as shown in Fig. \ref{memory}. It can be observed that the memory usage of the HUSPM method increased when the number of candidate patterns increased. For example, Fig. \ref{memory}(d) shows that USpan (ProUM) consumed about 600 megabytes more memory than HUSP-SP (HUSP-ULL), while USpan (ProUM) generated over 100,000,000 more candidate patterns than HUSP-SP (HUSP-ULL). The memory performance of USpan is generally poor. For instance, USpan ran out of memory when the minimum utility threshold $\xi$ was less than 1.20\% in Leviathan. Also, it can be found that the data structures utilized by ProUM and HUSP-ULL are not compact enough in some conditions. For example, in Fig. \ref{memory}(a), the USpan consumed about 200 megabytes less memory than ProUM and HUSP-ULL. However, the performance of the utility-matrix \cite{yin2012uspan} utilized by USpan was poor. In conclusion, the good memory usage performance of HUSP-SP proves that the newly proposed seq-array structure is compact, and the search space of HUSP-SP is much smaller.

\subsection{Scalability Test}

The robustness of the compared algorithms is analyzed in this subsection through the scalability test. The experiment was based on a synthetic multi-item element-based sequence dataset, namely C8S6T4I3D|X|K \cite{r1994quest}. The detailed results are shown in Fig. \ref{scalability}, including runtime, candidate, and memory efficiency. Note that the size of the \textit{SynDataset} varied from 10K to 400K sequences, and the minimum utility threshold $\xi$ was set to 0.001 throughout the experiment.

\begin{figure*}[h]
	\centering
	\includegraphics[width=1\linewidth]{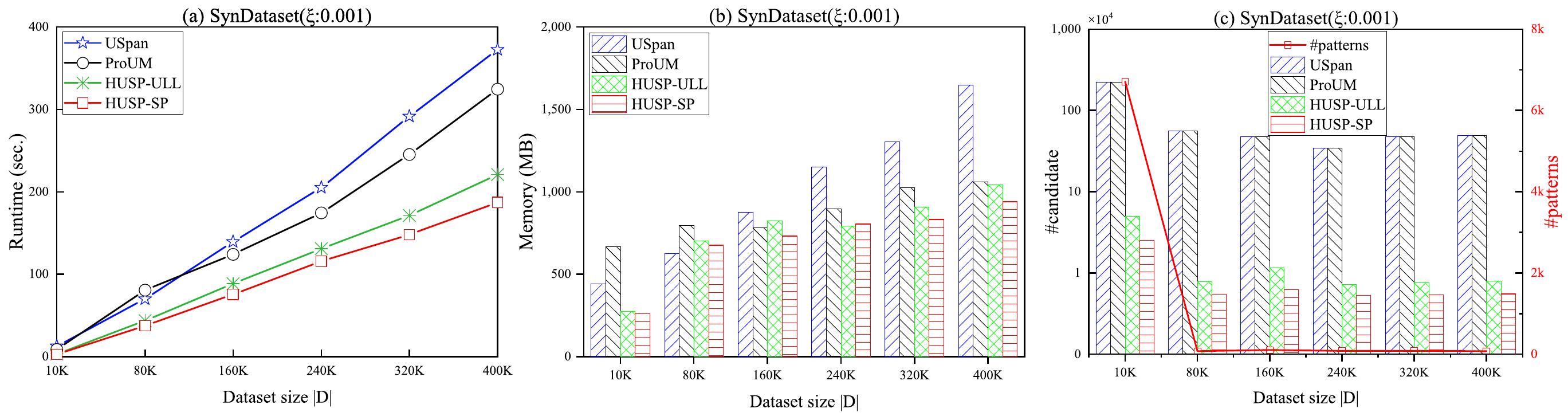}
	\caption{Scalability of the compared methods}
	\label{scalability}
\end{figure*}

As shown in Fig. \ref{scalability}, HUSP-SP had the best scalability among the compared algorithms for its minimum runtime, memory consumption, and candidate pattern number in all the test results. Furthermore, the runtime of HUSP-SP increased linearly as the number of dataset sequences grew. From Fig. \ref{scalability}(b), it can be observed that the memory usage of HUSP-SP is relatively stable, excluding the experiment with 10K sequences. Similar results can be found in Fig. \ref{scalability}(c) where the number of candidate patterns was kept stable while the size of \textit{SynDataset} varied from 80K to 400K. Therefore, the growth in memory usage of HUSP-SP comes mainly from the expansion of the processed dataset. The proposed HUSP-SP algorithm has good extensibility for dealing with large-scale datasets.

\subsection{Ablation Study}

We further conducted an ablation study on the proposed upper bound TRSU to evaluate its effects on execution performance in terms of runtime, memory consumption, and the number of generated candidates. Theoretically, introducing TRSU can reduce the search space and speed up the mining process. To evaluate the effectiveness of TRSU, we developed another algorithm, HUSP-SP*, by replacing the upper bound of HUSP-SP from TRSU to RSU. We examined two datasets, SynDataset80k and SynDataset160k, with different minimum utility threshold $\xi$ settings. The detailed study results are shown in Fig. \ref{ablation}.

\begin{figure*}[h]
	\centering
	\includegraphics[width=1\linewidth]{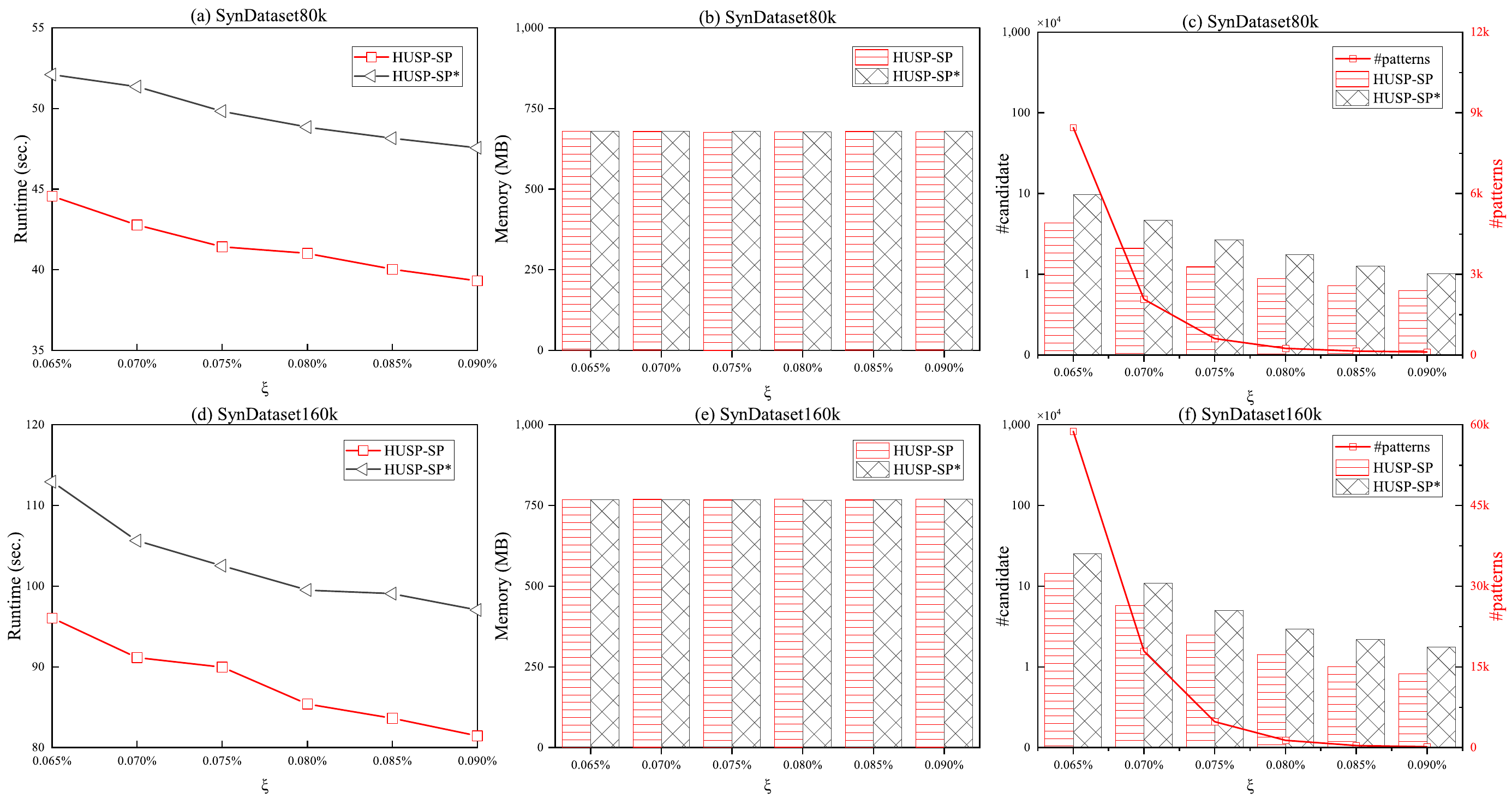}
	\caption{Ablation results of the TRSU}
	\label{ablation}
\end{figure*}

As we can see, HUSP-SP has a shorter runtime and fewer candidate patterns in all cases of the study results. It proves that TRSU does contribute to the high performance of HUSP-SP. It is interesting that HUSP-SP and HUSP-SP* have the same memory consumption result in all cases. To explain this phenomenon, we perform a further experiment and find that when we continually change the minimum utility threshold, e.g., set to 0.0004, the memory consumption of HUSP-SP on SynDataset80K increases to 750 MB, and the candidate number increases to 7,888,232. That is to say, the number of candidates increased by nearly 8,000,000 while memory consumption increased by less than 100 MB. Besides, in all the study cases in Fig. \ref{ablation}, we can find that the number of candidates varies by no more than 100,000. Therefore, it reflects that the memory consumption results are the same within a dataset. Furthermore, it demonstrates the stability of the proposed HUSP-SP: its memory consumption performance is not as sensitive to changes in $\xi$. From the study result comparison between datasets SynDataset80K and SynDataset160K, we can also find that HUSP-SP has good scalability in terms of database size. In conclusion, the proposed upper bound TRSU significantly contributes to improving the efficiency of the algorithms.

\section{Conclusion}  \label{sec:conclusion}

Due to its more comprehensive consideration of the sequence data, database-based sequence mining has played an important role in the domain of knowledge discovery in databases. In general, utility mining takes frequency, sequential order, and utility into consideration, while the combinatorial explosion of sequences and utility computation make utility mining a NP-hard problem. This article proposes a novel HUSP-SP algorithm that addresses the problem more efficiently than the existing methods. HUSP-SP developed the compact seq-arrays to store the necessary information from sequence data. Besides, the projected database structure, namely seqPro was designed to efficiently calculate the utilities and upper bound values of candidate patterns. Furthermore, a new tight utility upper bound, namely TRSU, and two search space pruning strategies are proposed to improve the mining performance of HUSP-SP. Extensive experimental results on both synthetic and real-life datasets show that the HUSP-SP algorithm outperforms the state-of-the-art algorithms, e.g., HUSP-ULL. In the future, an interesting direction is to redesign HUSP-SP and develop a parallel and distributed version, for example, utilizing MapReduce or Spark to discover the interesting HUSPs on large-scale databases in distributed environments.

\section*{Acknowledgment}

This research was supported in part by the National Natural Science Foundation of China (Nos. 62002136 and 62272196), Natural Science Foundation of Guangdong Province (Nos. 2020A1515010970 and 2022A1515011861), Shenzhen Research Council (No. GJHZ20180928155209705), and NSF Grants (Nos. III-1763325, III-1909323, and SaTC-1930941).

\bibliographystyle{ACM-Reference-Format}
\bibliography{HUSPSP.bib}
\end{document}